\documentclass[runningheads]{llncs}
\usepackage{color}
\usepackage{graphicx}

\usepackage{amsmath}
\usepackage{amsthm}
\usepackage{bm}
\usepackage{thmtools}
\usepackage{thm-restate}
%\usepackage{lineno}
%\linenumbers
\usepackage{hyperref}

\DeclareMathOperator{\bin}{bin}
\DeclareMathOperator{\enc}{enc}

\title{The Complexity of Approximate Pattern Matching on De Bruijn Graphs\thanks{This research is supported in part by the U.S. National Science Foundation (NSF) grants CCF-1704552, CCF-1816027, and CCF-2112643.}}

\author{Daniel Gibney${}^1$, Sharma V. Thankachan${}^2$, and Srinivas Aluru${}^1$}
\institute{${}^1$ School of Computational Science and Engineering, Georgia Institute of Technology, Atlanta, USA\\ 
${}^2$ Department of Computer Science, University of Central Florida, Orlando, USA\\
\email{Email: \{daniel.j.gibney,sharma.thankachan\}@gmail.com, aluru@cc.gatech.edu}\vspace*{-0.15in}}

\begin{document}
\maketitle
\begin{abstract}
Aligning a sequence to a \emph{walk} in a labeled graph is a problem of fundamental importance to Computational Biology. For finding a walk in an arbitrary graph with $|E|$ edges that exactly matches a pattern of length $m$, a lower bound based on the Strong Exponential Time Hypothesis (SETH) implies an algorithm significantly faster than $\mathcal{O}(|E|m)$ time is unlikely [Equi \emph{et al.}, ICALP 2019]. However, for many special graphs, such as de Bruijn graphs, the problem can be solved in linear time  [Bowe \emph{et al.}, WABI 2012]. For approximate matching, the picture is more complex. When edits (substitutions, insertions, and deletions) are only allowed to the pattern, or when the graph is acyclic, the problem is again solvable in $\mathcal{O}(|E|m)$ time. When edits are allowed to arbitrary cyclic graphs, the problem becomes NP-complete, even on binary alphabets  [Jain \emph{et al.}, RECOMB 2019]. These results hold even when edits are restricted to only substitutions. Despite the popularity of de Bruijn graphs in Computational Biology, the complexity of approximate pattern matching on de Bruijn graphs remained open. We investigate this problem and show that the properties that make de Bruijn graphs amenable to efficient exact pattern matching do not extend to approximate matching, even when restricted to the substitutions only case {with alphabet size four}.
Specifically, we prove that determining the existence of a matching walk in a de Bruijn graph is NP-complete when substitutions are allowed to the graph. In addition, we demonstrate that an algorithm significantly faster than $\mathcal{O}(|E|m)$ is unlikely for de Bruijn graphs in the case where only substitutions are allowed to the pattern. This stands in contrast to pattern-to-text matching where exact matching is solvable in linear time, like on de Bruijn graphs, but approximate matching under substitutions is solvable in subquadratic $\Tilde{O}(n\sqrt{m})$ time, where $n$ is the text's length  [Abrahamson, SIAM J. Computing 1987].
\end{abstract}

%\newpage
\section{Introduction}

De Bruijn graphs are an essential tool in Computational Biology. Their role in de novo assembly spans back to the 1980s~\cite{pevzner19891}, and their application in assembly has been researched extensively since then~\cite{DBLP:journals/jcb/ChikhiLJSM15,DBLP:journals/almob/ChikhiR13,DBLP:conf/sc/GeorganasBCORY14,DBLP:conf/recomb/LinSYCP16,DBLP:conf/recomb/PengLYC10,DBLP:journals/bioinformatics/PengLYLZC13,ren2012evaluating,zerbino2008velvet}. More recently, de Bruijn graphs have been applied in metagenomics and in the representation of large collections of genomes~\cite{DBLP:journals/pc/FlickJPA17,kamal2017bruijn,DBLP:journals/bioinformatics/LiLLSL15,DBLP:journals/pnas/PellHCHTB12,DBLP:journals/bioinformatics/YeT16} and for solving other problems such as read-error correction~\cite{DBLP:journals/bioinformatics/LimassetFP20,DBLP:journals/bioinformatics/MorisseLL18} and compression~\cite{BenoitLLDDUR15,HolleyWSH18}. Due to the popularity of de Bruijn graphs in the modeling of sequencing data, an algorithm to efficiently find \emph{walks} in a de Bruijn graph matching (or approximately matching) a given query pattern would be a significant advancement. For example, in metagenomics, such an algorithm could quickly detect the presence of a particular species within genetic material obtained from an environmental sample. Or, in the case of read-error correction, such an algorithm could be used to efficiently find the best mapping of reads onto a `cleaned' reference de Bruijn graph with low-frequency k-mers removed~\cite{DBLP:journals/bioinformatics/LimassetFP20}. To facilitate such tasks, several algorithms (often seed-and-extend type heuristics) and software tools have been developed that perform pattern matching on de Bruijn (and general) graphs~\cite{DBLP:journals/bioinformatics/AlmodaresiSSP18,DBLP:journals/bmcbi/HeydariMPF18,holley2012blastgraph,DBLP:journals/jcb/KavyaTSS19,DBLP:journals/bmcbi/LimassetCRP16,DBLP:journals/bioinformatics/LiuGBW16,DBLP:journals/tcs/Navarro00,rautiainen2017aligning}.

The importance of pattern matching on labeled graphs in Computational Biology and other fields has caused a recent surge of interest in the theoretical aspects of this problem. In turn, this has led to many new fascinating algorithmic and computational complexity results. However, even with this improved understanding of the theory of pattern matching on labeled graphs, our knowledge is still lacking in many respects concerning specific, yet extremely relevant, graph classes.
An overview of the current state of knowledge is provided in Table \ref{table:complexity_overview}.

\begin{table}[ht]
\small
\centering
\begin{tabular}{| l | l |  l |} 
\hline
&  Exact Matching &  Approximate Matching\\ 
\hline
& \underline{Solvable in Linear Time} & \underline{Solvable in $\mathcal{O}(|E|m)$ time}  \\ 
 & $\bullet$ Wheeler Graphs~\cite{DBLP:journals/tcs/GagieMS17} & $\bullet$ DAGs: Substitutions/Edits to graph~\cite{DBLP:journals/jcb/KavyaTSS19} \\ 
Easy & ~(e.g. de Bruijn graphs, &  $\bullet$ General graphs:\\
&NFAs for multiple strings) &~~Substitutions/Edits to pattern~\cite{DBLP:journals/jal/AmirLL00}\\
& & $\bullet$ de Bruijn Graphs: Substitutions to pattern \\
& & ~~-No strongly Sub-$\mathcal{O}(|E|m)$ alg. (this paper) \\
\hline
& \underline{NO Strongly Sub-$\mathcal{O}(|E|m)$ Alg.}~ & \underline{NP-Complete} \\
&  $\bullet$ General graphs ~\cite{DBLP:conf/icalp/EquiGMT19,DBLP:conf/sosa/GibneyHT21} & $\bullet$ General graphs:\\
& ~(including DAGs with & ~~Substitutions/Edits to vertex labels~\cite{DBLP:journals/jal/AmirLL00,DBLP:conf/recomb/JainZGA19} \\ 
Hard & ~~total degree $\leq$ 3) & $\bullet$ de Bruijn Graphs:\\
&  & ~~Substitutions to vertex labels (this paper)\\
\hline
\end{tabular}
\caption{The computational complexity of pattern matching on labeled graphs}
\label{table:complexity_overview}
\vspace{-2em}
\end{table}

For general graphs, we can consider exact and approximate matching. For exact matching, conditional lower-bounds based on the Strong Exponential Time Hypothesis (SETH), and other conjectures in circuit complexity, indicate that an $\mathcal{O}(|E|m^{1-\varepsilon} + |E|^{1-\varepsilon}m)$ time algorithm with any constant $\varepsilon > 0$, for a graph with $|E|$ edges and a pattern of length $m$, is highly unlikely (as is the ability to shave more than a constant number of logarithmic factors from the $\mathcal{O}(|E|m)$ time complexity)~\cite{DBLP:conf/icalp/EquiGMT19,DBLP:conf/sosa/GibneyHT21}. These results hold for even very restricted types of graphs, for example, DAGs with maximum total degree three and binary alphabets. For approximate matching, when edits are only allowed in the pattern, the problem is solvable in $\mathcal{O}(|E|m)$ time~\cite{DBLP:journals/jal/AmirLL00}. If edits are also permitted in the graph, but the graph is a DAG, matching can be done in the same time complexity~\cite{DBLP:journals/jcb/KavyaTSS19}. However, the problem becomes NP-complete when edits are allowed in arbitrary cyclic graphs. This was originally proven in \cite{DBLP:journals/jal/AmirLL00} for large alphabets and more recently proven for binary alphabets in \cite{DBLP:conf/recomb/JainZGA19}. These results hold even when edits are restricted to only substitutions. The distinction between modifications to the graph and modifications to the pattern is important as these two problems are fundamentally different. When changes are made to cyclic graphs the same modification can be encountered multiple times while matching a pattern with no additional cost (see Section 3.1 in \cite{DBLP:conf/recomb/JainZGA19} for a detailed discussion). Furthermore, algorithmic solutions appearing in \cite{DBLP:journals/jcb/KavyaTSS19,DBLP:journals/tcs/Navarro00,rautiainen2017aligning} are for the case where modifications are performed only to the pattern.

De Bruijn graphs are  interesting  from a theoretical perspective. Many graphs allow for extending Burrows-Wheeler Transformation (BWT) based techniques for efficient pattern matching. Sufficient conditions for doing this are captured by the definition of Wheeler graphs, introduced in \cite{DBLP:journals/tcs/GagieMS17}, and further studied in \cite{DBLP:journals/corr/abs-2002-10303,DBLP:conf/dcc/AlankoGNB19,DBLP:journals/corr/abs-2009-03675,DBLP:journals/corr/abs-2101-12341,DBLP:conf/esa/GibneyT19}. De Bruijn graphs are themselves Wheeler graphs, hence on a de Bruijn graph exact pattern matching  is solvable in linear time.
However, the complexity of approximate matching in de Bruijn graphs when permitting modifications to the graph or modifications to the pattern remained open~\cite{DBLP:conf/recomb/JainZGA19}.

We make two important contributions (see Table \ref{table:complexity_overview}). First, we prove that for de Bruijn graphs, despite exact matching being solvable in linear time, the approximate matching problem with vertex label substitutions is NP-complete. Second, we prove that a strongly subquadratic time algorithm for the approximate pattern matching problem on de Bruijn graphs, where substitutions are only allowed in the pattern, is not possible under SETH. This confirms the optimality of the known quadratic time algorithms when considering polynomial factors. 
To the best of our knowledge, these are the first such results for any type of Wheeler graph. 
Note that pattern-to-text matching (under substitutions) can be solved in sub-quadratic  
$\tilde{\mathcal{O}}(n\sqrt{m})$ time, where $n$ is the text's length~\cite{DBLP:journals/siamcomp/Abrahamson87}. 
%Our results demonstrate that \emph{the properties that make de Bruijn graphs amenable to efficient exact pattern matching do not extend to approximate matching}. 

\subsection{Technical Background and Our Results}
\label{sec:background_and_results}

%Some necessary terminology is presented next.

\textbf{Notation for edges:} For a directed edge from a vertex $u$ to a vertex $v$ we will use the notation $(u,v)$. Additionally, we will refer to $u$ as the \emph{tail} of $(u,v)$, and $v$ as the \emph{head} of $(u,v)$.

\textbf{Walks versus paths:} A distinction must be made between the concept of a \emph{walk} and a \emph{path} in a graph. A walk is a sequence of vertices $v_{1}$, $v_{2}$, ..., $v_{t}$ such that for each $i \in [1,t-1]$, $(v_{i}, v_{i + 1}) \in E$. Vertices can be repeated in a walk. A path is a walk where vertices are not repeated. The length of a walk is defined as the number of edges in the walk, $t-1$, or equivalently one less than the number of vertices in the sequence (counted with multiplicity).  
This work will be concerning the existence of walks.

\textbf{Induced subgraphs:} An induced subgraph of a graph $G = (V, E)$ consists of a subset of vertices $V' \subseteq V$, and all edges $(u,v) \in E$ such that $u,v \in V'$. This is in contrast to an arbitrary subgraph of $G$, where an edge can be omitted from the subgraph, even if both of its incident vertices are included.

\textbf{De Bruijn graphs:}
An \emph{order-$k$} full de Bruijn graph is a compact representation of all $k$-mers (strings of length $k$) from an alphabet $\Sigma$ of size $\sigma$. It consists of $\sigma^k$ vertices, each corresponding to a unique $k$-mer (which we call as its \emph{implicit label}) in $\Sigma^k$.
There is a directed edge from each vertex with implicit label $s_1s_2...s_k \in \Sigma^k$ to the $\sigma$ vertices with implicit labels $s_2s_3...s_{k}\alpha$, $\alpha \in \Sigma$.
We will work with induced subgraphs of full de Bruijn graphs in this paper. 
We assign to every vertex $v$ a label $L(v) \in \Sigma$, such that the implicit label of $v$ is $L(u_1)L(u_2)...L(u_{k-1})L(v)$ where $u_1, u_2,..., u_{k-1}, v$ is any length $k-1$ walk ending at $v$. This is equivalent to the notion of a de Bruijn graph constructed from $k$-mers commonly used in Computational Biology.

% For each vertex $v$, we define label $L(v) \in \Sigma$ as the last symbol of its implicit label. Therefore, 
%  the implicit label of a vertex $v$ as $L(u_1)L(u_2)...L(u_{k-1})L(v)$ where $u_1, u_2,..., u_{k-1}, v$ is any length $k-1$ walk ending at $v$.

% A full de Bruijn graph is a directed graph whose vertices consists of all $k$-mers (strings of length $k$) from an alphabet $\Sigma = \{0, 1, ..., \sigma-1\}$ in which there is a directed edge from each vertex $s_1s_2...s_k \in \Sigma^k$ to the $\sigma$ vertices $s_2s_3...s_{k}\alpha$, $\alpha \in \Sigma$. We will work with induced subgraphs of full de Bruijn graphs in this paper. We consider a vertex $v$ as being labeled with a single symbol $L(v) \in \Sigma$. %The length $k$ string associated with 
% The $k$-mer corresponding to the vertex is referred to as its \emph{implicit label}. %Formally, the implicit label of a vertex $v$ is the concatenation of the $k-1$ vertex labels on any walk ending at $v$ with the single symbol $L(v)$ appended to it. 
% Formally, the implicit label of a vertex $v$ is $L(u_1)L(u_2)...L(u_{k-1})L(v)$ where $u_1, u_2,..., u_{k-1}, v$ is any length $k-1$ walk ending at $v$. 
% When no such walk exists, we assume $v$ has an implicit label compatible with the other implicit labels in the graph. This is equivalent to the notion of a de Bruijn graph constructed from $k$-mers commonly used in Computational Biology.

\textbf{Strings and Matching:} For a string $S$ of length $n$ indexed from $1$ to $n$, we use $S[i]$ to denote the $i^{th}$ symbol in $S$. We use $S[i,j]$ to denote the substring $S[i]S[i+1]...S[j]$. If $j < i$, then we take $S[i,j]$ as the empty string. As mentioned above, we will consider every vertex $v$ as labeled with a single symbol $L(v) \in \Sigma$. %We say that 
A pattern $P[1,m]$ matches a walk $v_{1}$, $v_{2}$, ..., $v_{m}$ iff $P[i] = L(v_i)$ for every $i \in [1,m]$.

With these definitions in hand, we can formally define our first problem. 

\begin{problem}[Approximate matching with vertex label substitutions]
\label{prob:sub}
Given a vertex labeled graph $D = (V,E)$ with alphabet $\Sigma$ of size $\sigma$, pattern $P[1,m]$, and integer $\delta \geq 0$, determine if there exists a walk in $D$ matching $P$ after at most $\delta$ substitutions to the vertex labels.
\end{problem}

\begin{theorem}
\label{thm:NPC_hardness}
Problem \ref{prob:sub} is NP-complete on de Bruijn graphs with  $\sigma = 4$.
\end{theorem}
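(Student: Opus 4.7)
The plan is to reduce from $3$-SAT, adapting the high-level strategy used by Jain \emph{et al.}\ for general cyclic graphs to the restricted setting of an induced subgraph of a full order-$k$ de Bruijn graph with $\sigma=4$. Membership in NP is immediate: nondeterministically guess a walk of length $m$ and at most $\delta$ vertex-label substitutions, then verify matching in polynomial time. The rest of the proof is NP-hardness.

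Given a $3$-CNF formula $\phi$ with $n$ variables $x_1,\dots,x_n$ and $m$ clauses $C_1,\dots,C_m$, I would pick $k$ polynomial in $n+m$ so that (i) I have $\Theta(n+m)$ pairwise disjoint $k$-mer ``namespaces'' (distinguished by long, unique prefixes) that do not appear as factors of each other's constructions, and (ii) gadgets can be linked together without accidental $(k-1)$-overlaps. With $\sigma=4$, $|\Sigma^k|$ is exponentially larger than what the reduction consumes, which gives ample room for these disjoint namespaces; this abundance is the point at which alphabet size $4$ (rather than $2$) is used. For each variable $x_i$, I would construct a \emph{variable cycle}: a small de Bruijn cycle realized by a periodic string $w_i$ whose rotations all live in the namespace reserved for $x_i$. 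One distinguished vertex $v_i^{\mathrm{sub}}$ on the cycle has a default label encoding ``false'', while a single substitution at $v_i^{\mathrm{sub}}$ effectively re-labels \emph{every} visit to that vertex and hence encodes ``true''. Because a walk may revisit $v_i^{\mathrm{sub}}$ many times but only pays for one substitution, this is exactly the mechanism exploited by the cyclic-graph reductions of Amir \emph{et al.}\ and Jain \emph{et al.}

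For each clause $C_j$ I would attach a \emph{clause block} that the pattern forces the walk to enter; this block branches into three alternative sub-walks, one per literal of $C_j$, each of which routes through (or synchronizes with) the variable cycle of the corresponding variable. The pattern's characters in the $C_j$-section are chosen so that traversing the sub-walk for literal $\ell$ requires the label of $v_i^{\mathrm{sub}}$ (where $x_i$ is the underlying variable) to take the truth value that satisfies $\ell$. The overall pattern is a concatenation over all $m$ clauses, interleaved with long namespace-specific ``wires'' that force the walk into the correct block in order. Setting $\delta=n$ caps the total number of substitutions so that at most one can be spent per variable, enforcing a single globally consistent assignment.

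The main obstacle is reconciling the rigidity of de Bruijn semantics with this gadget soup: every vertex's implicit $k$-mer is determined by its context, so vertices cannot be placed in isolation and the wires between gadgets must be long enough (and namespace-disjoint enough) to rule out spurious shortcut walks. The crux of correctness is a tight counting argument: I would prove a lemma that any walk matching $P$ with at most $\delta=n$ substitutions must (a) enter each clause block via a sub-walk corresponding to a literal satisfied under a globally consistent assignment, and (b) spend its substitutions exactly at the $v_i^{\mathrm{sub}}$ vertices encoding that assignment. The forward direction then follows by explicitly constructing the walk from any satisfying assignment, and the reverse direction follows from the disjointness of namespaces together with this budget argument.
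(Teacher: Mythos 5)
Your proposal reduces from $3$-SAT rather than from Hamiltonian cycle, which is what the paper (following Amir \emph{et al.}\ and Jain \emph{et al.}) actually uses, so this is a genuinely different route. Unfortunately the proposal leaves the central difficulty unresolved: you acknowledge the ``main obstacle'' of de~Bruijn rigidity, but the step where each clause sub-walk ``routes through (or synchronizes with) the variable cycle of the corresponding variable'' is exactly where that rigidity bites, and you do not say how to overcome it. In an induced subgraph of a full de~Bruijn graph the edge set is entirely determined by $(k-1)$-overlaps of implicit labels; you cannot ``attach'' a clause block to a variable cycle at will. To have an edge from a clause-gadget vertex $w$ into the cycle vertex $v_i^{\mathrm{sub}}$, the length-$(k-1)$ suffix of $w$'s implicit label must equal the length-$(k-1)$ prefix of $v_i^{\mathrm{sub}}$'s implicit label; this forces the clause gadget's namespace to bleed into the variable cycle's namespace, directly contradicting the disjointness you rely on to ``rule out spurious shortcut walks.'' Worse, since $x_i$ typically occurs in several clauses, $v_i^{\mathrm{sub}}$ (or its neighborhood) must admit in-edges from several different clause gadgets, and once several distinct gadget vertices all overlap the same implicit label by $k-1$ symbols, the de~Bruijn closure forces additional unwanted edges among them. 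None of this is a detail that can be deferred; it is the reason the 3-SAT-style clause/variable architecture, which is natural on arbitrary cyclic graphs, does not transfer to de~Bruijn graphs without a new idea. Your counting lemma (items (a) and (b)) is asserted but rests entirely on this unbuilt machinery.

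The paper sidesteps the issue by reducing from Hamiltonian cycle and using a single uniform gadget: each original edge $(u,v)$ becomes a long path whose vertex labels spell $\#^W\enc(L(v))\$^W\enc(L(v))$, and vertices with equal implicit labels are then merged. The merges that create shared ``marked'' vertices are exactly the ones dictated by the de~Bruijn overlap rule, so the graph is a de~Bruijn graph by construction, and the pattern (a fixed concatenation of all $\enc(i)$ blocks) forces a walk that visits every marked vertex exactly once, corresponding to a Hamiltonian cycle. If you want to pursue the 3-SAT route you would need to show how to realize clause-to-variable edges as genuine $(k-1)$-overlaps while provably excluding all spurious overlaps, together with a substitute for the paper's Lemmas~\ref{lem:walk_length}--\ref{lem:path_iff_in_D}; as written, the proposal does not establish NP-hardness.
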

Theorem \ref{thm:NPC_hardness} is proven in Section \ref{sec:NPC_hardness}.  Intuitively, our reduction transforms a general directed graph into a de Bruijn that maintains key topological properties related to the existence of walks. 
%In our proof, the $k$-mer size is dependent on the alphabet size $\sigma$. For example, when $\sigma = 4$, we require $k = \Theta(\log^2 |V|)$, but when $\sigma = \Theta(\log |V|)$, we can reduce the value of $k$ to $\Theta(\log |V|)$. An extension of the hardness result for large $\sigma$ on connected graphs is given at the end of Section \ref{sec:extensions}.
%%
The distinct problem of approximately matching a pattern to a \emph{path} in a de Bruijn graph was shown to be NP-complete in \cite{DBLP:journals/bmcbi/LimassetCRP16}. As mentioned by the authors of that work, the techniques used there do not appear to be easily adaptable to the problem for walks. Our approach uses edge transformations more closely inspired by those used in \cite{DBLP:conf/wabi/KapunT13} for proving hardness on the paired de Bruijn sound cycle problem.

\begin{problem}[Approximate matching with substitutions within the pattern]
\label{prob:sub_pattern}
Given a vertex labeled graph $D = (V,E)$ with alphabet $\Sigma$ of size $\sigma$, pattern $P[1,m]$, and integer $\delta \geq 0$, determine if there exists a walk in $D$ matching $P$ after at most $\delta$ substitutions to the symbols in $P$.
\end{problem}

%For Problem \ref{prob:sub_pattern}, in Section \ref{sec:SETH_hardness} we provide a SETH-based hardness result. The Strong Exponential Time Hypothesis, or SETH, is frequently utilized in establishing the
For Problem \ref{prob:sub_pattern} we provide a hardness result based on SETH, which is frequently used for establishing
conditional optimality of polynomial time algorithms~\cite{DBLP:journals/talg/AbboudBHWZ18,DBLP:conf/focs/BackursI16,DBLP:conf/icalp/EquiGMT19,DBLP:conf/spire/Gibney20,DBLP:conf/sosa/GibneyHT21,DBLP:conf/esa/HoppenworthBGT20}.  We refer the reader to \cite{DBLP:conf/iwpec/Williams15} for the definition of SETH and for the reduction to the Orthogonal Vectors problem (OV), which is utilized to prove Theorem \ref{thm:SETH_hardness}. 

\begin{theorem}
\label{thm:SETH_hardness}
Conditioned on SETH, for all constants $\varepsilon > 0$, there does not exist an $\mathcal{O}(|E|m^{1-\varepsilon} + |E|^{1-\varepsilon}m)$ time algorithm for Problem \ref{prob:sub_pattern} on de Bruijn graphs with $\sigma = 4$. 
%running in time .  
%This holds for $\sigma \geq 4$ and $k=\Theta(\log |E|)$.
\end{theorem}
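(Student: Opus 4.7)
The plan is to prove Theorem \ref{thm:SETH_hardness} via a fine-grained reduction from the Orthogonal Vectors (OV) problem, paralleling the SETH-based lower bounds for pattern matching on general graphs of Equi \emph{et al.} and of Gibney--Hoppenworth--Thankachan, but engineered so that the output is a de Bruijn graph with $\sigma = 4$. Given an OV instance $(A, B)$ with $|A| = |B| = n$ vectors in $\{0,1\}^d$ and $d = \omega(\log n)$, I would construct a de Bruijn graph $D$, a pattern $P$, and a threshold $\delta$, all of size $\tilde{\mathcal{O}}(nd)$, such that a walk in $D$ matches $P$ with at most $\delta$ substitutions to $P$ iff some pair $(a_i, b_j) \in A \times B$ is orthogonal. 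A strongly subquadratic algorithm for Problem \ref{prob:sub_pattern} would then yield an $n^{2-\Omega(1)}\,\mathrm{poly}(d)$-time algorithm for OV, contradicting SETH.

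Taking $\Sigma = \{0, 1, \#, \$\}$ and an order $k = \Theta(d + \log n)$ to be fixed below, the graph $D$ has three regions: a single \emph{padding} vertex with $k$-mer $\$^k$ and a self-loop (so the walk can output an arbitrarily long $\$$-stream); a \emph{selector} tree of depth $\lceil \log_4 n \rceil$ rooted at $\$^k$, whose $n$ leaves are the start vertices of the vector gadgets and whose edges arise automatically as the one-symbol extensions of $k$-mers in the ambient full de Bruijn graph; and the $n$ \emph{vector gadgets} $G_{a_1}, \ldots, G_{a_n}$, where each $G_{a_i}$ is a length-$2d$ path spelling out an encoding $enc_A(a_i)$, followed by $k$ trailing $\$$-labeled vertices that rewind the walk back to $\$^k$. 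The pattern is $P = \$^{L_1}\, enc_B(b_1)\, \$^{s} \cdots \$^{s}\, enc_B(b_n)\, \$^{L_2}$ with $L_1, s, L_2$ chosen so the walk can time its descent into any chosen gadget to align with any chosen slot $enc_B(b_j)$.

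The per-bit encoding is designed so the Hamming distance between the walk's labels and $P$ decomposes into a fixed baseline plus $\langle a_i, b_j\rangle$ from the aligned slot. Concretely, setting $A_0 = 0\#$, $A_1 = 0\$$, $B_0 = 01$, $B_1 = 1\#$, a direct check gives $H(A_x, B_y) = 1 + xy$ for every $x,y \in \{0,1\}$. Thus the aligned slot contributes $d + \langle a_i, b_j\rangle$ mismatches, each of the other $n-1$ slots contributes $2d$ mismatches (walk outputs $\$$, pattern uses only non-$\$$ symbols in $enc_B$), and the selector path contributes a further $\lceil\log_4 n\rceil$ mismatches (ID symbols versus pattern $\$$'s). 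Setting $\delta$ equal to this fixed baseline $(n-1)\cdot 2d + \lceil\log_4 n\rceil + d$ makes the matching feasible iff some $(a_i, b_j)$ is orthogonal.

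The main technical obstacle, distinguishing this proof from its general-graph analogs, is verifying that the construction is realizable as an induced subgraph of a full de Bruijn graph: every $k$-mer must be distinct, and the induced edge set must contain no spurious cross-region edges that could shortcut the intended walks and achieve a lower Hamming distance. I would resolve this by choosing $k \ge 2d + \lceil\log_4 n\rceil$ so that every $k$-mer appearing inside a gadget or along its rewind retains the length-$\lceil\log_4 n\rceil$ identifier inherited from the selector path used to reach it; together with the distinctness of selector paths per gadget, this makes every $k$-mer in the construction unique. A short case analysis over the four possible $k$-mer shapes (self-loop, selector, gadget interior, rewind) then rules out any induced edge between vertices belonging to different gadgets or between non-consecutive positions of the same gadget, so the walks in $D$ are exactly those described above. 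Once this de Bruijn-realizability is in place, the mismatch accounting yields the desired equivalence with OV, and with $|E(D)|, |P| = \tilde{\mathcal{O}}(nd)$ the theorem follows.
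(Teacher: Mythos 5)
Your high-level strategy—reduce from OV, build a de Bruijn graph of vector gadgets reached via a selector structure, and design per-bit encodings so Hamming cost tracks inner product—is the right family of ideas and matches the paper's overall plan. But the synchronization mechanism in your construction is broken in a way that the paper's construction is specifically engineered to avoid.

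The fatal issue is the self-loop at $\$^k$. That loop lets the walk ``sit out'' any slot of $P$ for the price of $2d$ mismatches, but it equally lets the walk descend into the selector tree \emph{multiple} times and return. Your mismatch accounting silently assumes the walk descends exactly once. If the walk instead descends $m$ times to gadgets $a_{i_1},\dots,a_{i_m}$ aligned with slots $j_1,\dots,j_m$, the cost is
\[
2dn \;-\; m\bigl(d - \lceil\log_4 n\rceil\bigr) \;+\; \sum_{t=1}^{m}\langle a_{i_t}, b_{j_t}\rangle,
\]
whereas $\delta = 2dn - (d - \lceil\log_4 n\rceil)$. Take an OV instance with no orthogonal pair but where every inner product equals $1$ (e.g.\ all vectors have a single $1$ in coordinate one). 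With $m=2$ the cost is $2dn - 2(d-\lceil\log_4 n\rceil) + 2 = \delta - (d - \lceil\log_4 n\rceil) + 2 < \delta$ once $d > \lceil\log_4 n\rceil + 2$, which is forced by $d = \omega(\log n)$. Nothing in your $k$-mer-uniqueness check or in the freedom you leave yourself in choosing $L_1, s, L_2$ rules this out: any choice of $s$ large enough that a walk can descend to \emph{one} arbitrary slot already permits descending to every other (or every third) slot, since the only cost is re-traversing the $\lceil\log_4 n\rceil$-length selector path, and the rewind/padding lengths give the walk complete timing freedom.

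The paper avoids this precisely by having no absorbing $\$$-loop. Instead it uses a Synchronization loop of period exactly $k$ and a pattern built from blocks $(2^{\ell}3)^t\,2^{\lceil\log(N+1)\rceil}\,f_B(b_i)$, so that (Lemma 3) every $3$ in $P$ must match the unique $3$ in $D$, locking the walk into phase, and (Lemma 4) every vector-gadget block in $P$ is matched against a vector gadget in $D$, because staying in the loop during such a block is strictly worse than descending to the dedicated default path $f_A(0)^d f_A(1)$. This forces all $N$ slots to incur a per-slot cost of at least $\lceil\log(N+1)\rceil + 2(d+1)$, with equality for a slot iff some $a_j$ is orthogonal to $b_i$; the threshold is then a sum over all $N$ slots, so the walk has no slack to redistribute savings across slots the way your construction allows. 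To salvage your proof you would need to replace the self-loop with a synchronizing cycle of the right period and force every $enc_B(b_i)$ block to be consumed by either a gadget or a synchronizing substitute whose cost strictly dominates the default gadget—essentially reinventing the paper's Post-selection merge and Synchronization loop.
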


Note that the order of the de Bruijn graphs used in ours proofs are $\Theta(\log^2 |V|)$ for Theorem \ref{thm:NPC_hardness} and  $\Theta(\log |V|)$ for Theorem \ref{thm:SETH_hardness}.

\section{NP-Completeness of Problem 1 on De Bruijn Graphs}
\label{sec:NPC_hardness}
Our proof of NP-completeness uses a reduction from the Hamiltonian Cycle Problem on directed graphs, which is the problem of deciding if there exists a cycle through a directed graph that visits every vertex exactly once. It was proven NP-complete even when restricted 
to directed graphs where the number of edges is linear in the number of vertices~\cite{DBLP:journals/ipl/Plesnik79}.
To present the reduction, we introduce the concept of \emph{merging} two vertices. To merge vertices $u$ and $v$, we create a new vertex $w$. We then take all edges with either $u$ or $v$ as their head and make $w$ their new head. Next, we take all edges with either $u$ or $v$ as their tail and make $w$ their new tail. This makes the edges $(u,v)$ and $(v,u)$ (if they existed) into self-loops for $w$. If two self-loops are formed, we delete one of them. Finally, we delete the original vertices $u$ and $v$.

\begin{figure}
\centering
\begin{minipage}{.48\textwidth}
    \centering
    \includegraphics[width=.9\textwidth]{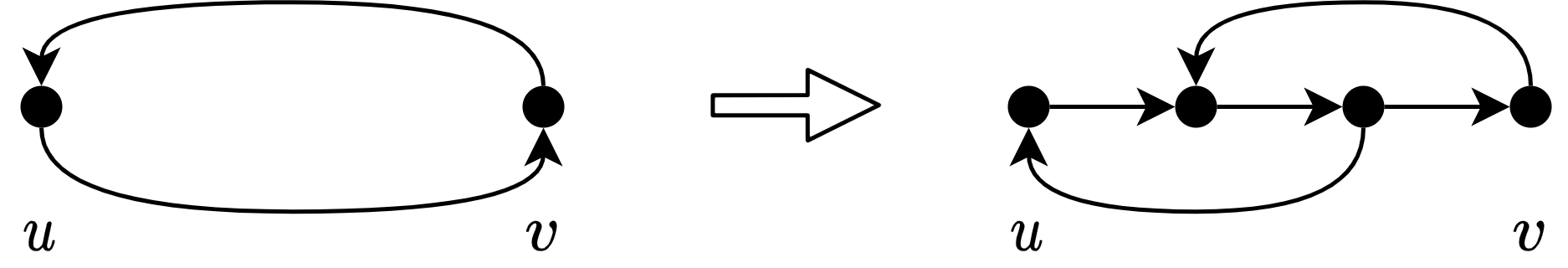}
     \caption{Gadget to remove cycles of length $2$ from the initial input graph.}
    \label{fig:cycle_gadget}
\end{minipage}
\begin{minipage}{.48\textwidth}
    \centering
    \includegraphics[width=.7\textwidth]{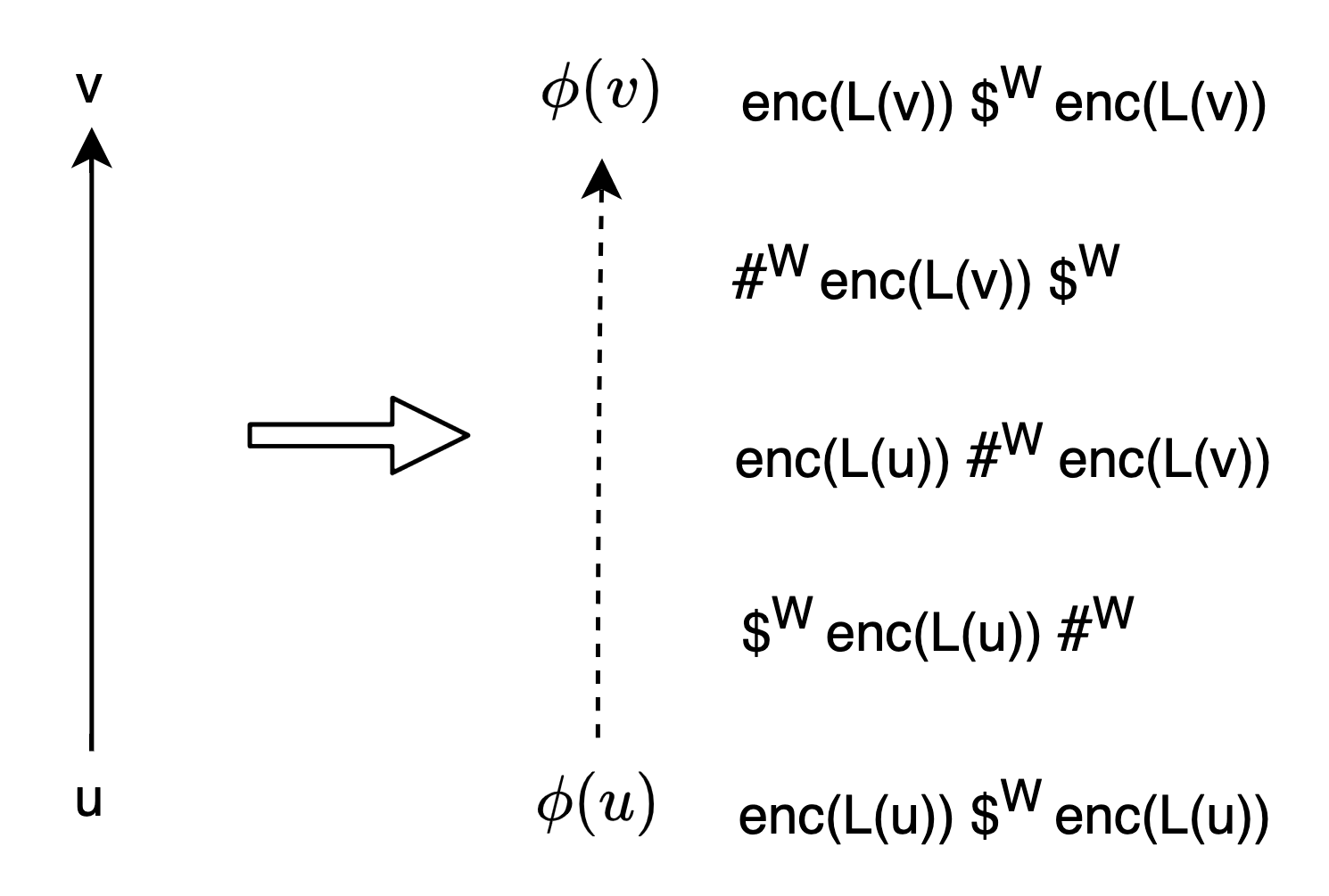}
    \vspace{-1em}
    \caption{The transformation from edges to paths used in our reduction.}
    \label{fig:edge_transform}
\end{minipage}
\vspace{-1em}
\end{figure}

\begin{figure}
\centering
\includegraphics[width=.9\textwidth]{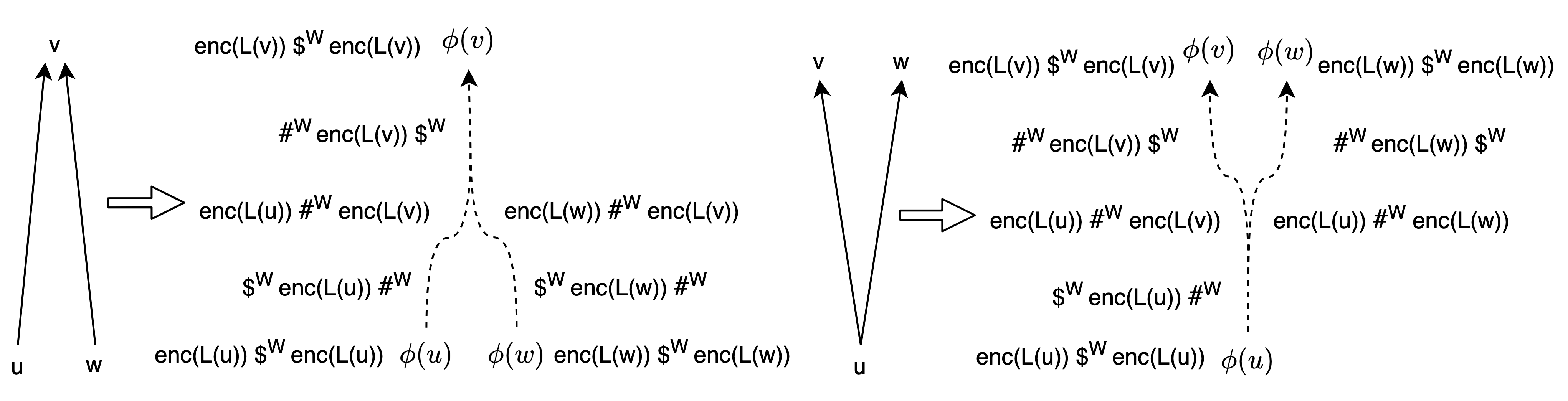}
\vspace{-1em}
\caption{Vertices with the same implicit label are merged while transforming $D$ to $D'$, causing edges with shared vertices to become paths with shared vertices.}
\label{fig:merge}
\vspace{-1em}
\end{figure}

\vspace{-2em}
\subsection{Reduction}

We start with an instance of the Hamiltonian cycle problem on a directed graph where the number of edges is linear in the number of vertices. 
%The Hamiltonian cycle problem remains NP-complete even when restricted to such graphs~\cite{DBLP:journals/ipl/Plesnik79}. 
We can assume there are no self-loops or vertices with in-degree or out-degree zero. To simplify the proof, we first eliminate any cycles of length $2$ using the gadget in Figure \ref{fig:cycle_gadget}. We denote the resulting graph as $D = (V,E)$ and let $n = |V|$.We assign each vertex $v \in V$ a unique integer $L(v) \in [0,n-1]$. Let $\ell = \lceil \log n \rceil$, $\bin(i)$ be the standard binary encoding of $i$ using $\ell$ bits and $\Sigma = \{\$, \#, 0, 1\}$. Define $\enc(i) = (0^{2\ell} 1)^{2\ell} \bin(i)$, $W = |\enc(i)|$, and $k = 3W$.

We construct a new (de Bruijn) graph $D' = (V', E')$ as follows:  Initially $D'$ is the empty graph. For $i = 0, 1, \dots, n-1$, for each edge $(u, v) \in E$ where $L(v) = i$, create a new path whose concatenation of vertex labels is $\#^W\!\enc(i)\$^W\!\enc(i)$. The vertex $u$ will correspond with a new vertex $\phi(u)$ at the start of this path, and the vertex $v$ will correspond with a new vertex $\phi(v)$ at the end of this path.  The vertex $\phi(v)$ has the implicit label $\enc(L(v))\$^W\!\enc(L(v))$. The vertex $\phi(u)$ is temporarily assigned the implicit label $\enc(L(u))\$^W\!\enc(L(u))$. See Figure \ref{fig:edge_transform}. We call vertices with implicit labels of the form $\enc(L(\cdot))\$^W\!\enc(L(\cdot))$ \emph{marked vertices}. We use the notation $\phi((u,v))$ to denote the path created when applying this transformation to $(u,v) \in E$. After the path $\phi((u,v))$ is created,  vertices in $V'$ having the same implicit label are merged, and parallel edges are deleted (Figure \ref{fig:merge}). See Figure \ref{fig:transformation} for a complete example. Finally, let $\delta = 2\ell(n-1)$ and
\begin{align*}
P = &\#^W\!\enc(0)\$^W\!\enc(0)\#^W\! \enc(1)\$^W\!\enc(1)\#^W\! \hdots\\ 
&\#^W\! \enc(n-1)\$^W\!\enc(n-1)\#^W\!\enc(0)\$^W\!\enc(0).
\end{align*}
We will show that there exists a walk in $D'$ matching $P$ with at most $\delta$ vertex label substitutions iff $D$ contains a Hamiltonian cycle.

\begin{figure}
    \begin{minipage}{0.3\textwidth}
    \centering
    \includegraphics[width=\textwidth]{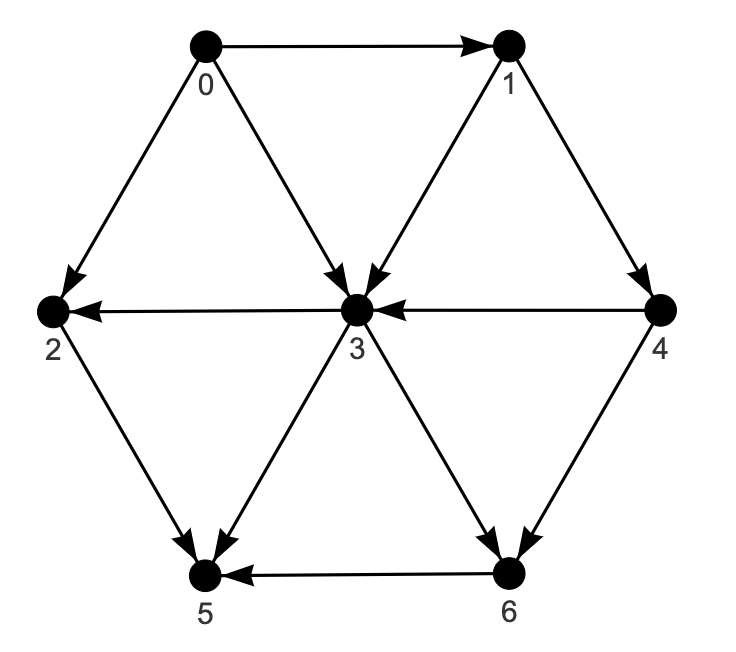}
    \end{minipage}
    \begin{minipage}{0.7\textwidth}
    \centering
    \includegraphics[width=\textwidth]{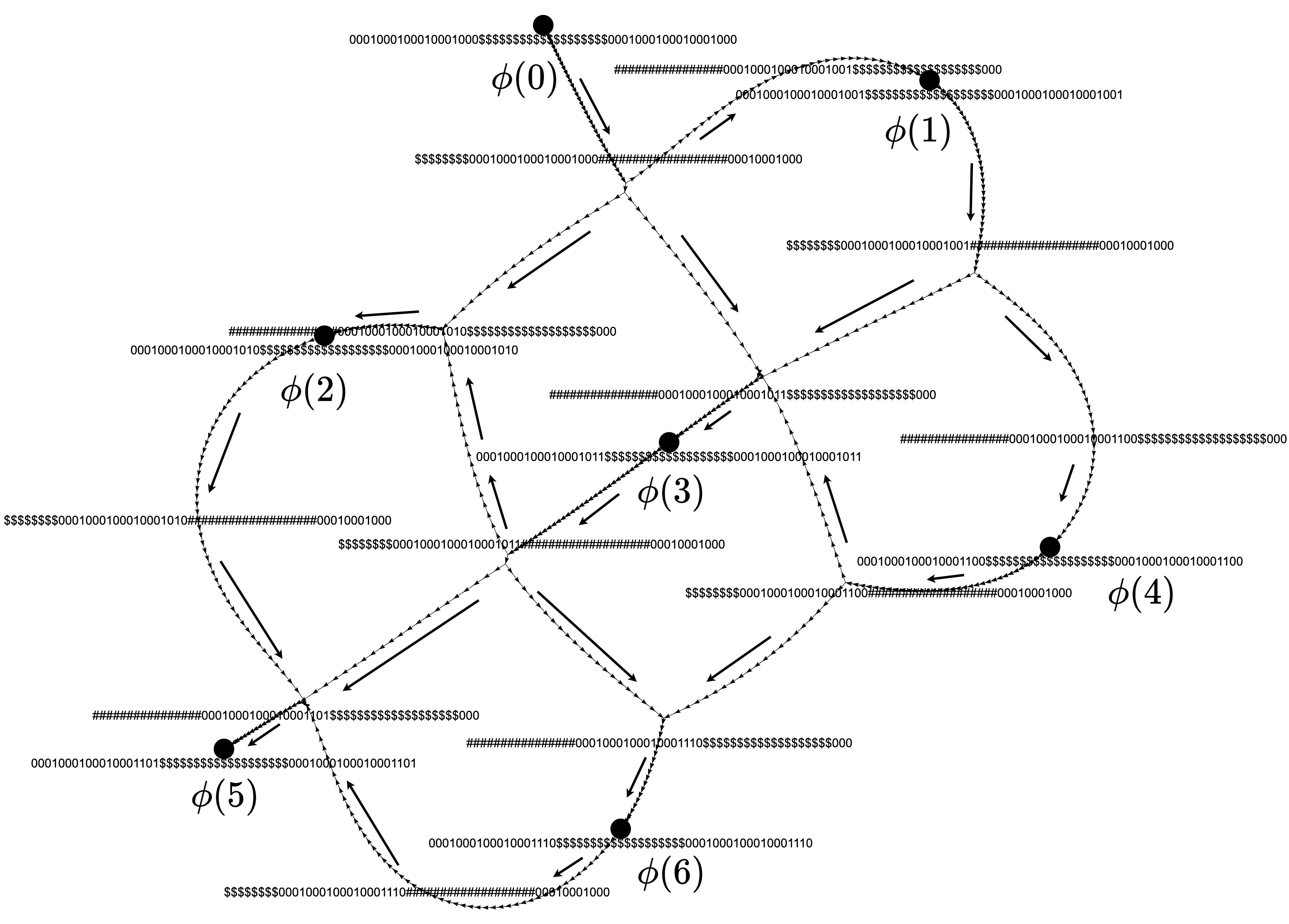}
    \end{minipage}
    \caption{(Left) A graph before the reduction is applied to it. (Right) The transformed graph. A subset of the implicit labels are shown, and the path directions are annotated by arrows beside each path. Note that $\enc(\cdot)$ has been modified to have the prefix $(0^\ell 1)^{\ell + 1}$ so that it fits in the figure. Also, unlike in the figure, we assume in our reduction that there are no vertices with in-degree or out-degree zero.}
    \label{fig:transformation}
    \vspace{-1em}
\end{figure}

\subsubsection{Proof of Correctness}
\label{sec:NPC_proof_of_correctness}

\begin{restatable}{lemma}{npcIsDebruijn}
\label{lem:is_debruijn}
The graph $D'$ constructed as above is a de Bruijn graph.
\end{restatable}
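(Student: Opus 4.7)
The plan is to exhibit $D'$ as an induced subgraph of the full order-$k$ de Bruijn graph over $\Sigma$, where $k = 3W$, by specifying an implicit-label assignment $\psi : V' \to \Sigma^k$ and verifying: (i) $\psi$ is well-defined and injective, (ii) every edge $(x,y) \in E'$ satisfies the shift property $\psi(x)[2..k] = \psi(y)[1..k-1]$, and (iii) whenever two vertices $x,y \in V'$ satisfy this overlap, the edge $(x,y)$ lies in $E'$. Together with the consistency requirement that every length-$(k-1)$ walk ending at $v$ reads out $\psi(v)[1..k-1]$, these conditions are exactly what it means for $D'$ to be a de Bruijn graph in the sense of the paper.

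To define $\psi$, I would use the fact that each vertex of $V'$ arose by merging one or more positions on the paths $\phi((u,u'))$: set $\psi(\phi(v)) := \enc(L(v))\$^W \enc(L(v))$ for marked vertices, and otherwise take $\psi(v)$ to be the length-$k$ window of vertex labels ending at $v$'s position inside its path, using the intended label of the marked start $\phi(u)$ to fill in the prefix when the window extends past the start of the path. Injectivity is immediate: the merging step collapses exactly the positions whose intended labels agree. The edge-shift condition (ii) follows at once, since every edge of $E'$ was introduced between consecutive positions on some constructed path.

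The crux is the consistency of $\psi$ and condition (iii). Both follow from a \emph{self-locating} property of the length-$k$ windows of $\#^W\enc(i)\$^W\enc(i)$: each such $k$-mer either contains a full block of $\#^W$ or $\$^W$, or contains a partial block at a specific offset relative to the distinctive $0^{2\ell}1$ segments of the prefix $(0^{2\ell}1)^{2\ell}$ of $\enc(i)$. I would argue that each intended $k$-mer $\psi(v)$ uniquely determines both the path it came from (through the identity $i = L(v)$ encoded in the $\bin(i)$ suffix) and the position inside that path, so every length-$(k-1)$ walk ending at $v$ is forced to read out exactly $\psi(v)[1..k-1]$ (consistency), and any $(k+1)$-mer realizing an overlap between two vertices of $V'$ must arise either as a length-$(k+1)$ window within some $\#^W\enc(i)\$^W\enc(i)$ or at a junction glued through a merged marked vertex $\phi(w)$. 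Either way the edge $(x,y)$ was explicitly created during the construction.

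The hardest step will be the case analysis supporting the self-locating claim, in particular checking that two $k$-windows drawn from different positions or different paths cannot coincide unless merged by construction. This relies on the specific choice $\enc(i) = (0^{2\ell}1)^{2\ell}\bin(i)$: the repeated prefix $(0^{2\ell}1)^{2\ell}$, of length $4\ell^2 + 2\ell$, is long enough that any nontrivial shift between two such segments yields a detectable mismatch, which rules out spurious identifications while allowing the intended merges between paths sharing a marked vertex $\phi(v)$ to go through as planned.
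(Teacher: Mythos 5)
Your overall plan matches the paper's: show that $D'$ is an induced subgraph of the full order-$k$ de Bruijn graph by verifying uniqueness of implicit labels, well-definedness along walks, and the "no missing edges" condition, with the prefix $(0^{2\ell}1)^{2\ell}$ of $\enc(\cdot)$ playing the role of a self-synchronizing marker. That is the right decomposition, and defining $\psi$ directly from path position (rather than via walks) is a reasonable reorganization of the argument. However, two of your intermediate claims are stated in a way that does not survive scrutiny, and they paper over exactly the part the proof actually has to work for.

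First, the assertion that ``each intended $k$-mer $\psi(v)$ uniquely determines both the path it came from \dots and the position inside that path'' is false as stated: the merging step is designed precisely so that many vertices lie on several paths $\phi((u,v))$ simultaneously. The marked vertex $\phi(v)$, and more importantly a band of vertices \emph{near} $\phi(v)$ whose implicit labels look like $\enc(L(u))[i,W]\,\#^W\enc(L(v))\,\$^{i-1}$ for $i$ close to $W+1$, can coincide across two incoming paths $\phi((u,v))$ and $\phi((u',v))$ whenever $\bin(L(u))$ and $\bin(L(u'))$ share a short suffix. These merged vertices have a single $\psi$-value but multiple (path, position) preimages, so ``the path it came from'' is not well-defined; you would need to restate the self-locating property as determining the position and the \emph{set} of originating edges. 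Second, and relatedly, your treatment of the no-missing-edges condition says any realized overlap ``must arise either as a length-$(k+1)$ window within some $\#^W\enc(i)\$^W\enc(i)$ or at a junction glued through a merged marked vertex $\phi(w)$.'' This misses the genuinely delicate case: a vertex $x$ on $\phi((u',v))$ whose implicit label is $\beta\,\enc(L(u))[i,W]\,\#^W\enc(L(v))\,\$^{i-2}$ overlaps with a vertex $y$ on a different path $\phi((u,v))$, at a position that is \emph{not} a marked vertex. The required edge $(x,y)$ exists in $E'$ only because $y$ happens to have been merged with the successor of $x$ on $\phi((u',v))$ --- and showing that this merge occurs requires explicitly comparing the two implicit labels, not just observing that the paths share $\phi(v)$. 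The paper's case analysis devotes an entire case to this situation (and flags it as ``interesting''); your proposal would need to carry out that analysis rather than appeal to junctions at marked vertices, because the merges that save you are not located at the marked vertices themselves.
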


\begin{proof} \textit{(Overview)}
Three properties must be proven:  (i) Implicit labels are unique, meaning for every implicit label at most one vertex is assigned that label; (ii) No edges are missing, i.e., if the implicit label of $y \in V'$ is $S \alpha$ for some string $S[1,k-1]$ and symbol $\alpha \in \Sigma$, and there exists a vertex $x \in V'$ with implicit label $\beta S[1,k-1]$ for some symbol $\beta \in \Sigma$, then $(x,y) \in E'$; (iii) Implicit labels are well-defined, in that every walk of length $k-1$ ending at a vertex $x \in V'$ matches the same string (the implicit label of $x$); The most involved of these is proving property (ii), which requires analyzing several cases. The full proof is given in Appendix \ref{sec:appendix_npc}.
\end{proof}

The correctness of the reduction remains to be shown. Lemmas \ref{lem:walk_length}-\ref{lem:path_iff_in_D} establish useful structural properties of $D'$, Lemma \ref{lem:hampath_implies_walk} proves that the existence of a Hamiltonian Cycle in $D$ implies an approximate matching in $D'$, and Lemmas \ref{lem:dollar_signs}-\ref{lem:match_implies_hampath} demonstrate the converse.
%The proof that there exists a Hamiltonian cycle in $D$ iff $P$ can be matched to a walk in $D'$ by making at most $\delta$ substitutions to the vertex labels is broken into Lemmas \ref{lem:walk_length}-\ref{lem:match_implies_hampath}.

\begin{restatable}{lemma}{npcWalkLength}
\label{lem:walk_length}
Any walk between two marked vertices $\phi(u)$ and $\phi(v)$ %whose only contained marked vertices are $\phi(u)$ and $\phi(v)$
containing no additional marked vertices has length $4W$. Hence, we can conclude any such walk is a path.
\end{restatable}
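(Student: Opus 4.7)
The plan is to establish that any walk between two marked vertices with no intermediate marked vertex is exactly the merged image of a single constructed path $\phi((u,v))$, which by construction traverses $4W$ edges.

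First, I would verify that the only vertices in $D'$ whose implicit labels have the form $\enc(\cdot)\$^W\enc(\cdot)$ are the marked vertices. Every $\$^W$ block occurring as labels along a walk in $D'$ arises from the unique middle $\$^W$ segment of some constructed path $\phi((u',v'))$, so any $3W$-window of walk labels aligning with such a block corresponds to the endpoint $\phi(v')$ of that path. Hence a walk between two marked vertices containing no other marked vertex must produce an $\enc(\cdot)\$^W\enc(\cdot)$ substring only at its very last $3W$ labels.

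Next, I would analyze the label sequence of a walk from $\phi(u)$ to $\phi(v)$. The implicit label of $\phi(u)$ ends with $\enc(L(u))$, and every traversed edge appends one symbol; the final $3W$ symbols of the extended label stream must equal $\enc(L(v))\$^W\enc(L(v))$. Tracing forward, the set of allowable transitions in $D'$ (dictated by which edges exist among vertices with specific implicit labels), together with the structure $\enc(i) = (0^{2\ell}1)^{2\ell}\bin(i)$, forces the appended suffix to be exactly $\#^W\enc(L(v))\$^W\enc(L(v))$. In particular, the $\#^W$ block is forced at the start because leaving $\phi(u)$ with any other symbol would either be unsupported by an edge in $D'$ or would prematurely build toward a marked implicit label. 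This appended suffix has length $4W$, so the walk uses $4W$ edges.

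To conclude that the walk is a path, I would argue that the $4W+1$ successive $3W$-windows of the label stream (the implicit labels of the traversed vertices) are pairwise distinct. The synchronizing blocks $\#^W$ and $\$^W$, together with the positional fingerprint $(0^{2\ell}1)^{2\ell}$ inside each $\enc$, ensure that no two shifts of the window coincide, so no vertex is revisited. The main obstacle will be ruling out shortcuts through merged unmarked vertices shared by distinct constructed paths. This requires a careful characterization of which interior vertices in $D'$ are identified by the merging step (roughly: only at positions where the implicit label does not yet depend on the specific endpoint being left, such as the early shared prefix among outgoing paths from $\phi(u)$, or the late shared suffix among incoming paths to $\phi(v)$), and then arguing that these shared regions never permit a walk to cross from one path $\phi((u,v))$ to a different $\phi((u',v'))$ without first visiting a marked vertex.
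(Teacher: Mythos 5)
Your proposal takes a genuinely different route from the paper. The paper proves this by induction on the number of edge transformations, carrying an inductive hypothesis that simultaneously pins down the forward and backward distances from each newly created vertex (classified by which of four implicit-label forms it has) to its adjacent marked vertices, and then verifying that the merging step preserves these distances. You instead work directly on the finished graph $D'$, arguing that the sequence of labels along the walk is forced step by step by which implicit labels can exist in $D'$, so the appended label stream must be $\#^W\enc(L(v))\$^W\enc(L(v))$ of length $4W$. Your route is plausible and, executed carefully, would prove the same statement; it would essentially replace the paper's distance invariant by the observation that the implicit label of each vertex encodes its ``phase'' within a $4W$-step period.

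There are two genuine gaps. First, the heart of your argument --- that the transitions are forced, so the suffix must be exactly $\#^W\enc(L(v))\$^W\enc(L(v))$ and a marked vertex cannot be reached earlier or later --- is asserted rather than proven. Making it rigorous requires a complete enumeration of which $3W$-length strings actually occur as implicit labels in $D'$ (the four families $\enc(L(u))[j,W]\$^W\enc(L(u))\#^{j-1}$, $\$^{W-j}\enc(L(u))\#^W\enc(L(v))[1,j]$, $\enc(L(u))[j,W]\#^W\enc(L(v))\$^{j-1}$, $\#^{W-j}\enc(L(v))\$^W\enc(L(v))[1,j]$), together with a case analysis showing that for each such label there is a unique admissible successor shape. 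You acknowledge this as the main obstacle but do not carry it out; the paper supplies exactly this information through its induction. Second, your closing claim that the shared regions ``never permit a walk to cross from one path $\phi((u,v))$ to a different $\phi((u',v'))$ without first visiting a marked vertex'' is misaimed for this lemma and, taken literally, false: walks routinely move between paths that share an endpoint (e.g., the common $\#^W$ prefix of $\phi((u,v_1))$ and $\phi((u,v_2))$ branches, and the common suffix of $\phi((u_1,v))$ and $\phi((u_2,v))$ merges), and this is precisely what the merging step is designed to allow. What Lemma~\ref{lem:walk_length} needs is not that crossings are impossible but that any crossing preserves the phase --- i.e., that every merged vertex inherits a well-defined distance to its adjacent marked vertices --- which is what the paper's invariant establishes and what your forced-label-sequence argument would also yield once gap one is filled. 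The statement you wrote is closer to what is needed for Lemma~\ref{lem:path_iff_in_D}, and conflating the two obscures the actual mechanism behind the length bound.
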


\begin{proof} \textit{(Overview)}
This is proven using induction on the number of edges transformed. It is shown that for every vertex, a key property regarding the distances to its closest marked vertices continues to hold after vertices on any newly created path are merged. See Appendix \ref{sec:appendix_npc} for the full proof.
\end{proof}

\begin{lemma}
\label{lem:disjoint_paths}
For $(u_1,v_1), (u_2, v_2) \in E$, unless $u_1 = u_2$ or $v_1 = v_2$, $\phi((u_1,v_1))$ and $\phi((u_2, v_2))$ share no vertices. 
\end{lemma}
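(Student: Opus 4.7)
My plan is to reduce vertex sharing in $D'$ to equality of implicit labels---two vertices of $D'$ are identified precisely when a merge step declares them to share an implicit label---and then analyze which positions on the two paths can possibly produce such an equality. Index positions of $\phi((u,v))$ by $j \in [0, 4W]$ so that $\phi(u)$ sits at $j=0$ and $\phi(v)$ at $j=4W$, and write $T_u := \enc(L(u))\$^W\enc(L(u))$ and $S_v := \#^W\enc(L(v))\$^W\enc(L(v))$. Any walk of length $k-1 = 3W-1$ ending at position $j$ reads the length-$3W$ suffix of $T_u \cdot S_v[1,j]$, so the implicit label there is the window
\[
\mathrm{IL}(j, u, v) \;=\; (T_u \cdot S_v)\bigl[\, j{+}1,\; j{+}3W \,\bigr].
\]

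The first step is that the pattern of $\#$'s and $\$$'s inside this window depends only on $j$: indeed, the $\#$'s and $\$$'s of $T_u \cdot S_v$ lie in the fixed positions $[3W{+}1, 4W]$ and $[W{+}1, 2W] \cup [5W{+}1, 6W]$ regardless of $L(u), L(v)$. A case split over the subranges $[0,W], [W,2W], [2W,3W], [3W,4W]$ then shows the resulting configuration inside the window is pairwise distinct for every interior $j \in [1, 4W{-}1]$: within each subrange, the starting position of the $\#^W$-block or the length of a $\$^W$-block varies strictly monotonically with $j$. Hence, if two positions---at least one of them interior---have equal implicit labels, their indices coincide, $j_1 = j_2 =: j$. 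For such a common $j$, I would then note that the window contains the entire second copy of $\enc(L(u))$ (located at positions $[2W{+}1, 3W]$ of $T_u$) whenever $j \in [0, 2W]$, and contains the entire first copy of $\enc(L(v))$ (located at positions $[4W{+}1, 5W]$ of $T_u \cdot S_v$) whenever $j \in [2W, 4W]$. Thus, for every $j \in [0, 4W]$, the equation $\mathrm{IL}(j, u_1, v_1) = \mathrm{IL}(j, u_2, v_2)$ forces $L(u_1) = L(u_2)$ or $L(v_1) = L(v_2)$; by injectivity of $L$ this contradicts the hypothesis and rules out every interior coincidence.

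The remaining comparisons involve only marked endpoints, i.e., $j_i \in \{0, 4W\}$ on both sides; each such vertex carries an implicit label of the form $\enc(L(\cdot))\$^W\enc(L(\cdot))$, so equality reduces to equality of the underlying $L$-values and is excluded by the hypothesis (the cross identifications $L(u_1)=L(v_2)$ or $L(v_1)=L(u_2)$ correspond to boundary marked vertices that are by design shared between adjacent edges of $D$). The main obstacle is the shape-injectivity claim in the first step: one must track, across each of the four subranges, how the $\#^W$-block and the two $\$^W$-blocks slide, grow, or shrink inside the length-$3W$ window, and confirm that their joint pattern is never the same at two distinct interior offsets $j$. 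This bookkeeping is mechanical but must be handled carefully at the boundaries between adjacent subranges.
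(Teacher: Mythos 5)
Your proof is correct in substance and takes a genuinely different route from the paper. The paper's proof does a case analysis on the intersection $\{u_1,v_1\}\cap\{u_2,v_2\}$ and argues somewhat informally that the two paths' implicit labels contain incompatible $\enc(\cdot)$ blocks or have $\#$'s in different positions. You instead make the structure explicit: you parametrize positions on $\phi((u,v))$ by an offset $j\in[0,4W]$, express the implicit label at offset $j$ as the length-$3W$ window $(T_u\cdot S_v)[j{+}1,j{+}3W]$, observe that the $\#$/$\$$ shape of that window depends only on $j$, and verify that it determines $j$ uniquely on $[1,4W-1]$. That collapses all possible interior coincidences to the diagonal $j_1=j_2$, after which the presence of a complete $\enc(L(u))$ block (for $j\le 2W$) or $\enc(L(v))$ block (for $j\ge 2W$) inside the common window forces $u_1=u_2$ or $v_1=v_2$. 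This is cleaner and more uniform than the paper's case-by-case argument, and the shape-monotonicity bookkeeping you defer as ``mechanical'' does check out across the four subranges (within each subrange the relevant block boundary moves strictly with $j$; across adjacent subranges the blocks sit on opposite sides of the window unless $j$ is the shared corner).

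One point needs tightening, and it is the same spot where the paper's own proof is loose. You write that endpoint coincidences are ``excluded by the hypothesis,'' but the cross cases $\phi(u_1)=\phi(v_2)$ (i.e.\ $u_1=v_2$) and $\phi(v_1)=\phi(u_2)$ (i.e.\ $v_1=u_2$) are \emph{not} excluded by the lemma's hypothesis, and your own parenthetical admits they occur by design when edges of $D$ are adjacent. As stated, this contradicts the claim that the paths ``share no vertices.'' The clean fix—also needed to make the paper's case ``$v_1=u_2$, $u_1\neq v_2$'' go through, since $\phi(v_1)$ and $\phi(u_2)$ both have implicit label $\enc(L(v_1))\$^W\enc(L(v_1))$ and hence identical $\#$/$\$$ shape—is to read $\phi((u,v))$ as the $4W$ newly created vertices, i.e.\ offsets $j\in[1,4W]$ with $\phi(u)$ excluded. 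Then the only boundary coincidence possible in your scheme is $j_1=j_2=4W$, giving $\phi(v_1)=\phi(v_2)$, which \emph{is} excluded by the hypothesis, and your argument closes without the hedge.
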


\begin{proof}
In the case where $\{u_1, v_1\} \cap \{u_2, v_2\} = \emptyset$ (Figure \ref{fig:no_intersection} left), every implicit vertex label in $\phi((u_1,v_1))$ contains $\enc(L(u_1))$ or $\enc(L(v_1))$ (or both), and contains neither $\enc(L(u_2))$ nor $\enc(L(v_2))$. Similarly, every implicit vertex label in $\phi((u_2, v_2))$ contains $\enc(L(u_2))$ or $\enc(L(v_2))$ (or both) and contains neither $\enc(L(u_1))$ nor $\enc(L(v_1))$. This implies that none of the implicit labels match between the two paths, thus no vertices are merged. In the case where $v_1 = u_2$ and $u_1 \neq v_2$ (Figure \ref{fig:no_intersection}, right), the implicit labels of vertices $\phi((u_1, v_1))$ not containing $\enc(L(u_1))$ have $\#$ symbols in different positions than implicit labels of vertices in $\phi((u_2, v_2))$ not containing $\enc(L(v_2))$, and, since $v_1 \neq v_2$, cannot match the implicit labels of vertices in $\phi((u_2, v_2))$ containing $\enc(L(v_2))$. Vertices in $\phi((u_1, v_1))$ with implicit labels containing $\enc(L(u_1))$ have $\#$ symbols in different positions than implicit labels of vertices in $\phi((u_2, v_2))$ not containing $\enc(L(u_2))$, and, since $u_1 \neq u_2$, cannot match the implicit labels of vertices in $\phi((u_2, v_2))$ containing $\enc(L(u_2))$. The case $u_1 = v_2$ and $u_2 \neq v_1$ is symmetric. The case $u_1 = v_2$ and $v_1 = u_2$ cannot happen since, by the use of our gadget in Figure \ref{fig:cycle_gadget}, $D$ cannot contain the edges $(u_1,v_1)$ and $(v_1,u_1)$.
\end{proof}

\begin{figure}
%\vspace{-2em}
\includegraphics[width=\textwidth]{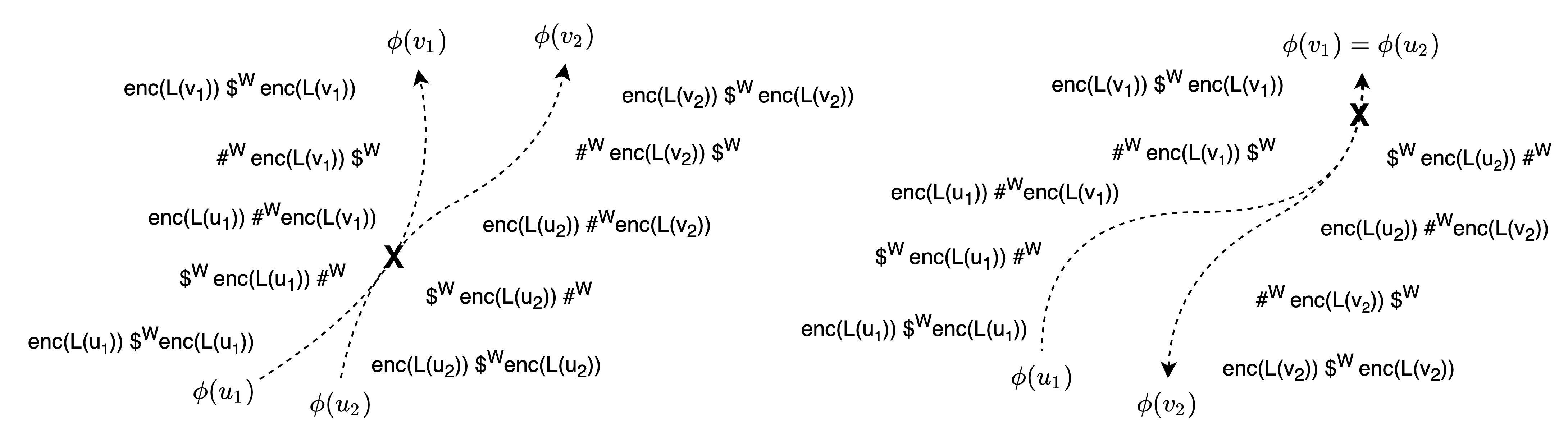}
\vspace{-1em}
\caption{Examples where paths between marked vertex cannot share any vertex: (Left) The case where $\{u_1, v_1\}\cap \{u_2, v_2\} = \emptyset$. (Right) The case where $v_1 = u_2$ and $u_1 \neq v_2$.
}
\label{fig:no_intersection}
\vspace{-2em}
\end{figure}

\begin{restatable}{lemma}{npcPathIffInD}
\label{lem:path_iff_in_D}
There exists a path from a marked vertex $\phi(u) \in V'$ to a marked vertex $\phi(v) \in V'$ containing no other marked vertices iff there is an edge $(u,v) \in E$.
\end{restatable}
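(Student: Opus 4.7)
The plan is to prove the two directions separately, leaning on the structural lemmas already in place.

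For the $(\Leftarrow)$ direction, suppose $(u,v) \in E$. The construction explicitly creates a path $\phi((u,v))$ with label concatenation $\#^W\enc(L(v))\$^W\enc(L(v))$, starting at $\phi(u)$ and ending at $\phi(v)$. I would argue this path survives intact into $D'$ using Lemma \ref{lem:disjoint_paths}: intermediate vertices of $\phi((u,v))$ cannot merge with intermediate vertices of any other constructed path, since collisions between $\phi((u_1,v_1))$ and $\phi((u_2,v_2))$ require $u_1=u_2$ or $v_1=v_2$, and in those cases occur only at the shared marked endpoint. It then remains to check that no intermediate vertex of $\phi((u,v))$ is itself marked: by directly computing the implicit label at path position $q$ for $1 \le q \le 4W-1$, one sees that the label always contains at least one $\#$ (inherited either from the leading $\#^W$ block or from $\phi(u)$'s history), whereas a marked label $\enc(L(w))\$^W\enc(L(w))$ contains no $\#$.

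For the $(\Rightarrow)$ direction, assume there is a path $\phi(u) = x_0, x_1, \ldots, x_t = \phi(v)$ in $D'$ containing no intermediate marked vertices. By Lemma \ref{lem:walk_length} we have $t = 4W$. Reading the implicit label $\enc(L(v))\$^W\enc(L(v))$ of $\phi(v)$ as the last $k=3W$ labels of the walk immediately forces $L(x_{W+1})\cdots L(x_{4W}) = \enc(L(v))\$^W\enc(L(v))$, and in particular $L(x_{W+1})\cdots L(x_{2W}) = \enc(L(v))$. I would then argue by induction on $i \in [1,W]$ that $L(x_i) = \#$. At step $i$ the implicit label of $x_i$ is the shift-by-one of $x_{i-1}$'s implicit label with $L(x_i)$ appended, so it has the form $\enc(L(u))[i+1..W]\$^W\enc(L(u))\#^{i-1}\alpha$; a case analysis of where $\$^W$ can appear as a contiguous substring of an implicit label on any constructed path $\phi((u',v'))$ shows that such a label is realized in $V'$ only at position $i$ of paths $\phi((u,w))$ for $(u,w)\in E$, and only with $\alpha = \#$. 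Once $L(x_1)\cdots L(x_W) = \#^W$ is pinned down, the implicit label of $x_{2W}$ evaluates to $\enc(L(u))\#^W\enc(L(v))$, so the presence of $x_{2W}$ in $V'$ requires that the construction created some path $\phi((u,w))$ with $\enc(L(w)) = \enc(L(v))$; since $\enc$ is injective on vertex identifiers, $w = v$, so $(u,v) \in E$.

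The main obstacle I anticipate is the inductive step of the $(\Rightarrow)$ direction, namely ruling out that $x_{i-1}$ has an out-neighbor in $V'$ whose new symbol $\alpha$ is $0$, $1$, or $\$$ rather than $\#$. Because $D'$ is an induced subgraph of the full order-$k$ de Bruijn graph, each vertex could in principle have up to four out-neighbors, so the restriction must come from the fact that $V'$ contains only vertices placed on some $\phi((u',v'))$. The cleanest route I see is to track the alignment of the unique $\$^W$-run inside the candidate implicit label and to enumerate, over all $q$ and all constructed paths, which positions can produce that alignment; the enumeration collapses to the single case $\alpha = \#$ with $u' = u$. After this is settled at $i=1$, the remaining positions $i=2,\ldots,W$ reduce to the same calculation with shifted indices, so essentially all of the technical content is concentrated in this one step.
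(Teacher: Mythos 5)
Your proof proposal takes a genuinely different route for the forward direction than the paper does, and that direction is where the content lies.

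For the $(\Leftarrow)$ direction, your conclusion is right but one intermediate claim is false. You assert that when $u_1 = u_2$ or $v_1 = v_2$, collisions between $\phi((u_1,v_1))$ and $\phi((u_2,v_2))$ ``occur only at the shared marked endpoint.'' In fact, whole segments of intermediate vertices near the shared endpoint are merged: e.g.\ if $u_1 = u_2$, the vertices on both paths at positions $1, 2, \ldots$ have implicit labels of the form $\enc(L(u_1))[j,W]\$^W\enc(L(u_1))\#^{j-1}$, which depend only on $u_1$ and hence coincide (this is exactly the situation depicted in the paper's Figure~3). Fortunately the error is harmless here: the direction is trivial because merging never deletes edges, so the path $\phi((u,v))$ always survives, and your separate $\#$-based argument that no intermediate vertex of $\phi((u,v))$ can be marked is what actually carries the claim.

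For the $(\Rightarrow)$ direction your strategy diverges from the paper's. The paper supposes for contradiction that such a path exists while $(u,v)\notin E$, identifies the first edge transformation step at which the path could arise, and rules out each configuration of merged paths via Lemmas~2 and~3 (the length-$4W$ invariant and the disjointness of unrelated paths). You instead pin the walk down directly: by Lemma~2 it has length exactly $4W$; the trailing $3W$ labels are forced to be $\enc(L(v))\$^W\enc(L(v))$ by reading off $\phi(v)$'s implicit label; an induction on the first $W$ positions forces each $L(x_i)=\#$ by an exhaustive match of the candidate implicit label against the four structural families of implicit labels present in $V'$; and then the implicit label of the midpoint $x_{2W}$ evaluates to $\enc(L(u))\#^W\enc(L(v))$, which is realized in $V'$ only by a vertex created on some $\phi((u,w))$ with $L(w)=L(v)$, forcing $(u,v)\in E$. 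This is correct, and it is arguably more elementary: it does not reference the order in which edges were transformed, only the finite catalogue of implicit label shapes. The cost is the enumeration you flag as the ``main obstacle''; note also a small wrinkle you do not mention, namely that the case $i=W$ of the induction transitions from matching the family $\enc(L(u'))[j,W]\$^W\enc(L(u'))\#^{j-1}$ to the family $\$^{W-j}\enc(L(u'))\#^W\enc(L(v'))[1,j]$, so the ``shifted indices'' argument needs one boundary case handled separately. The paper's contradiction-by-construction-order argument trades that enumeration for the earlier investment in Lemmas~2 and~3, which it reuses wholesale.
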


\begin{proof} \textit{(Overview)}
It is clear from construction that if $(u,v) \in D$, then such a path exists in $D'$. In the other direction, we utilize Lemmas \ref{lem:walk_length} and \ref{lem:disjoint_paths} to show that such a path existing without a corresponding edge would create a contradiction. The full proof is provided in Appendix \ref{sec:appendix_npc}.
\end{proof}

%Lemma \ref{lem:hampath_implies_walk} proves one side of the reduction. Lemmas \ref{lem:dollar_signs} through \ref{lem:match_implies_hampath} work to establish its converse. 

\begin{lemma}
\label{lem:hampath_implies_walk}
If $D$ has a Hamiltonian cycle, then $P$ can be matched in $D'$ with at most $\delta$ substitutions to vertex labels of $D'$. 
\end{lemma}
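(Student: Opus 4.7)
The plan is to traverse the Hamiltonian cycle through $D'$ and to use the substitution budget $\delta$ to reconcile the order in which the cycle visits vertices with the canonical order $0, 1, \ldots, n-1$ encoded into $P$. Let $c_0, c_1, \ldots, c_{n-1}, c_0$ be a Hamiltonian cycle in $D$. Since Hamiltonian cycles are invariant under cyclic rotation, I will assume without loss of generality that $L(c_0) = 0$, rotating the cycle to begin at the vertex labelled $0$.

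I will construct a walk $\mathcal{W}$ in $D'$ as follows: start at the second vertex of $\phi((c_{n-1}, c_0))$, traverse the remainder of this path to $\phi(c_0)$, then traverse $\phi((c_0, c_1)), \phi((c_1, c_2)), \ldots, \phi((c_{n-2}, c_{n-1}))$ in sequence ending at $\phi(c_{n-1})$, and finally re-traverse $\phi((c_{n-1}, c_0))$ to end again at $\phi(c_0)$. By Lemma~\ref{lem:walk_length}, each complete traversal of a path $\phi((\cdot, \cdot))$ excluding its starting marked vertex contributes exactly $4W$ symbols, so $\mathcal{W}$ has length $(n+1)\cdot 4W = |P|$. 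Its label sequence decomposes into $n+1$ length-$4W$ blocks, with block $j$ equal to $\#^W \enc(L(c_j)) \$^W \enc(L(c_j))$ (setting $c_n := c_0$).

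Comparing against $P$ block by block, the $\#^W$ and $\$^W$ segments match trivially, and because $L(c_0) = 0$ the two end blocks ($j = 0$ and $j = n$) match $P$ exactly. For each middle block $1 \leq j \leq n-1$ we must transform $\enc(L(c_j))$ into $\enc(j)$; since $\enc(i) = (0^{2\ell}1)^{2\ell}\bin(i)$, the two strings differ only in the $\ell$-bit suffix $\bin(\cdot)$, so each copy of $\enc$ requires at most $\ell$ substitutions and each middle block at most $2\ell$. Summing over the $n-1$ middle blocks gives a total of at most $2\ell(n-1) = \delta$ substitutions.

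The main subtlety to handle is that substitutions are applied to vertex labels, so a single substitution changes the label at every occurrence of that vertex in $\mathcal{W}$. By Lemma~\ref{lem:disjoint_paths}, distinct paths $\phi((c_{j-1}, c_j))$ share only marked endpoints, and each $\phi(c_j)$ for $1 \leq j \leq n-1$ appears exactly once in $\mathcal{W}$ (as the terminus of one path, immediately before the walk jumps to the second vertex of the next). The only vertices visited twice are those along $\phi((c_{n-1}, c_0))$, which is traversed at both the start and the end of $\mathcal{W}$ to produce blocks $0$ and $n$; but in $P$ both of these blocks already read $\#^W \enc(0) \$^W \enc(0)$, so no substitution is needed on those vertices. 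Consequently the substitutions across different middle blocks act on disjoint sets of vertices, and the total stays within the budget $\delta$, establishing the lemma.
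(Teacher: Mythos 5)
Your proof is correct and takes essentially the same approach as the paper: traverse the paths $\phi((c_{j-1},c_j))$ in the order dictated by the Hamiltonian cycle rotated to start at the vertex labeled $0$, substitute only the bits of $\bin(\cdot)$ to synchronize with $P$, and invoke Lemma~\ref{lem:disjoint_paths} to ensure those substitutions act on disjoint vertex sets. You spell out more detail than the paper---the precise starting vertex, the explicit block decomposition, and the careful handling of the doubly-traversed path $\phi((c_{n-1},c_0))$, which you correctly observe requires no substitutions at all---but the underlying argument is identical.
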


\begin{proof}
To obtain a matching walk, follow the cycle corresponding to a solution in $D$ starting with the marked vertex in $V'$ corresponding to the vertex in $V$ with label $0$. By Lemma \ref{lem:path_iff_in_D}, each edge traversed in $D$ corresponds to a path in $D'$. While traversing these paths, modify the vertex labels in $D'$ corresponding to the substrings $\bin(i)$ to match $P$. Assuming no conflicting substitutions are needed, this requires at most $2\ell(n-1)$ substitutions. 

It remains to be shown that no conflicting label substitutions will be necessary. Consider the edges $(u_1,v_1), (u_2, v_2) \in E$ used in the Hamiltonian cycle in $D$. We will never have $u_1 = u_2$ or $v_1 = v_2$. Hence, by Lemma \ref{lem:disjoint_paths}, the sets of vertices on the paths $\phi((u_1, v_1))$ and $\phi((u_2, v_2))$ are disjoint.
\end{proof}

\begin{restatable}{lemma}{npcDollarSigns}
\label{lem:dollar_signs}
If $P$ can be matched in $D'$ with at most $\delta$ substitutions to vertex labels of $D'$, then all $\$$'s in $P$ are matched with non-substituted $\$$'s in $D'$ and all $\#$'s in $P$ are matched with non-substituted $\#$'s in $D'$. Consequently, we can assume the only substitutions are to the vertex labels corresponding to $\bin(i)$'s within $\enc(i)$'s.
\end{restatable}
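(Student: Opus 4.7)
The plan is to establish a rigidity argument: the highly regular, self-non-overlapping structure of $(0^{2\ell}1)^{2\ell}$ inside every $\enc(i)$ forces any walk that matches $P$ within budget to be \emph{phase-aligned} with $P$, and once phase alignment is established, the $\#^W$ and $\$^W$ runs in $P$ automatically land on vertices of $D'$ whose original labels are $\#$ and $\$$.

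First, I would characterize the label sequence of any walk in $D'$. By the construction of the paths $\phi((u,v))$ and the merging performed at the marked vertices, the label string of any walk in $D'$ is (apart from a partial path at the start and/or end) a concatenation of length-$4W$ blocks of the form $\#^W \enc(L(v_i)) \$^W \enc(L(v_i))$, one per edge $(u_i, v_i) \in E$ used. Consequently the walk's label string has $\#$-labeled positions in runs of length exactly $W$ and $\$$-labeled positions in runs of length exactly $W$, both recurring with period $4W$, and the absolute positions of these runs are fixed by a single offset $\tau \in \{0, \ldots, 4W-1\}$ determined by the starting vertex.

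Second, I would show that every $\tau \neq 0$ violates the budget $\delta = 2\ell(n-1)$. The key observation is that $(0^{2\ell}1)^{2\ell}$ has exact period $2\ell+1$ and admits no nontrivial self-alignment: for any shift $s \in \{1,\ldots,2\ell\}$, a full $(2\ell+1)$-period contributes at least $2$ character mismatches, giving at least $\Omega(\ell)$ mismatches in every $\enc$-region whose walk- and pattern-versions overlap. Summed over the $2(n+1)$ $\enc$-regions of $P$, this already yields $\Omega(\ell n)$ mismatches, comfortably exceeding $\delta$. For the remaining shifts, namely multiples of $2\ell+1$, the periodic part re-aligns but the non-overlapping prefix/suffix of each $\enc$-region (of length $\tau \geq 2\ell+1$ on each side) consists entirely of walk $\#$/$\$$ characters opposing pattern $\{0,1\}$ characters (or the reverse), so every such position is a forced mismatch and the total is again $\Omega(\ell n)$. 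In both subcases these mismatches occur across the $\enc$-regions of the $n+1$ path traversals, and in particular in the edge-specific (non-merged) portions of those paths, so they correspond to pairwise-distinct vertices of $D'$ and the mismatch count is a genuine lower bound on the number of substituted vertices.

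Having ruled out every $\tau \neq 0$, the walk must be phase-aligned with $P$: its $\#$-labeled vertices sit exactly at the $\#$-positions of $P$ and its $\$$-labeled vertices at the $\$$-positions of $P$. These walk vertices come from the $\#^W$ and $\$^W$ portions of the traversed paths and therefore already carry labels $\#$ and $\$$ in $D'$, so no substitution is used at any such position; all substitutions must lie at positions of $P$ holding a $\{0,1\}$-symbol, i.e., inside the $\bin(i)$ portions of the $\enc(i)$ regions, as claimed. The main technical obstacle I anticipate is the distinctness / lower-bound check in the second step: one must verify that the $\Omega(\ell n)$ mismatches produced by a nonzero shift are spread across $\Omega(\ell n)$ pairwise-distinct vertices of $D'$ (rather than concentrating on a small set of merged or repeatedly visited vertices, each removable by a single substitution), and that walks which begin at a marked vertex $\phi(u)$ or part-way through a path are captured uniformly by the same offset parameter $\tau$.
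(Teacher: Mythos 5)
Your overall strategy---argue that any nonzero shift of the pattern against the walk is ruled out by the periodicity of $(0^{2\ell}1)^{2\ell}$ together with the rigid $\#^W/\$^W$ run structure, after which phase alignment immediately gives the claim---is indeed the same strategy the paper uses, and the first step (periodicity/rigidity of walk labels in $D'$) is the paper's ``Observation~2.'' However, the obstacle you flag at the end as something ``to anticipate'' is not a loose end; it is the central difficulty, and your proposal does not close it.

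Concretely, substitutions are applied to vertex labels of $D'$, not to positions of $P$: if a single vertex is visited $r$ times by the matching walk, then one substitution can repair $r$ pattern mismatches at once. Your count of ``$\Omega(\ell n)$ mismatches across $\enc$-regions'' is therefore not a lower bound on the number of substituted vertices unless you show those mismatches land on pairwise-distinct vertices. The paper's proof supplies exactly the structural facts needed to do this, and without analogues of them your argument does not go through. Specifically, the paper establishes (Observation~1) that every $u\in V$ determines a unique \emph{branchless} path in $D'$ (all in- and out-degrees equal to one) of length $3W-\ell$ matching $\enc(L(u))\,\#^W\,\enc(L(u))[1,W-\ell]$, containing $\phi(u)$; it then uses Lemma~\ref{lem:walk_length} to guarantee that the matching walk hits an originally marked vertex at least $n$ times on $P'=P[1,|P|-4W]$, and the fact that all length-$(3W-\ell)$ substrings of $P'$ are distinct to conclude that each branchless path can be traversed at most once. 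Branchless paths attached to distinct marked vertices are vertex-disjoint, so this yields $n$ vertex-disjoint paths each forced to absorb the per-block substitution cost, which is what makes the total a genuine substitution count rather than a mismatch count. Your proposal has no substitute for this step, and the alternative dangers you yourself name (mismatches concentrating on merged or repeatedly visited vertices) are real in $D'$ because merging is exactly how $D'$ is built.

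Two secondary issues: (a) the budget is $\delta=2\ell(n-1)=\Theta(\ell n)$, so an $\Omega(\ell n)$ mismatch bound with an unspecified constant does not ``comfortably exceed'' $\delta$; the paper computes exact coefficients, e.g.\ $(s+2\ell-1)n>2\ell(n-1)$ for $1\le s<2\ell$ and $sn\ge 2\ell n>\delta$ for $2\ell\le s<W$. (b) Your case split (``shifts in $\{1,\dots,2\ell\}$'' versus ``multiples of $2\ell+1$'') does not cover all offsets: shifts $s$ with $2\ell<s<W$ that are not multiples of $2\ell+1$, and offsets with $W<s<4W$ where an entire $\#^W$ run is placed against a non-$\#$ region of $P$, need their own treatment. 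The paper handles the former by counting only the $s$ forced $\$/\#$ versus $0/1$ mismatches, and the latter by noting that a full length-$W$ run of $\#$'s must be substituted on each of the $n$ traversals, giving $nW>\delta$.
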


\begin{proof} \textit{(Overview)}
We establish the existence of a long, non-branching path for every marked vertex that can be traversed at most once when matching $P$. This, combined with maximal paths of, $\$$, $\#$, and 0/1-symbols, all being of length $W$, makes it so that `shifting' $P$ to match a portion of $D$ forces the shift to occur throughout the walk traversed while matching $P$. Utilizing the large Hamming distance between shifted instances of two encodings, we can then show that not matching all non-0/1 symbols requires more than $\delta$ substitutions. The full proof is provided in Appendix \ref{sec:appendix_npc}.
\end{proof}

Post-substitution to vertex labels, we will refer to a vertex as marked if there exists a walk ending at it that matches a string of the form $\enc(L(u)) \$^W \enc(L(u))$, $u \in V$. Note that this definition does not require all length $k-1$ walks ending at such a vertex to match the same string.

\begin{lemma}
\label{lem:remains_marked}
If $P$ can be matched in $D'$ with at most $\delta$ substitutions to vertex labels of $D'$, then no additional marked vertices are created due to vertex substitutions.
\end{lemma}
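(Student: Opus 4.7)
I proceed by contradiction, assuming some vertex $x \in V'$ becomes marked only after substitutions have been applied. Then by definition there is a walk $w_1, w_2, \ldots, w_{3W} = x$ in $D'$ whose post-substitution label sequence spells $\enc(L(u))\$^W\enc(L(u))$ for some $u \in V$. By Lemma~\ref{lem:dollar_signs} no substitution ever alters a $\$$-labelled vertex, so the $W$ middle vertices $w_{W+1}, \ldots, w_{2W}$ must already carry the label $\$$ in $D'$.

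The first substantive step is to localize these $\$$-vertices within $D'$. The only $\$$-labelled vertices in $D'$ arise from the $\$^W$ blocks in the middle of the edge-paths $\phi((\cdot,v_0))$ created by the reduction, with the first $W-1$ of these $\$$-vertices being private to each path (their implicit labels inherit predecessor symbols that depend on the path's tail) and the last one (the vertex with implicit label $\#^W\enc(L(v_0))\$^W$) being shared across all edges into $v_0$. Since in-edges into an interior $\$$-vertex come only from the preceding $\$$-vertex of the same path, and the exit from any $\$^W$-block to a non-$\$$ vertex is possible only at the shared terminal vertex, the $W$ consecutive $\$$-vertices in the walk must coincide exactly with the $\$^W$ block of some single path $\phi((u_0,v_0))$, with $w_{2W}$ equal to that shared last $\$$-vertex.

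With the walk pinned down through $w_{2W}$, I extend it forward. For each $j \in \{0,1,\ldots,W-1\}$, I would verify that the vertex with implicit label $\#^{W-j-1}\enc(L(v_0))\$^W\enc(L(v_0))[1..j]$ has, in $D'$, a unique out-neighbour, namely the one obtained by appending $\enc(L(v_0))[j+1]$. The reason is that appending any other $\alpha \in \Sigma$ would require a vertex in $D'$ whose implicit label interleaves $\#$'s, $\$$'s, and 0/1 runs in a pattern that no path template $\#^W\enc(\cdot)\$^W\enc(\cdot)$ can produce; the key rigidity is that every $\enc(\cdot)$ starts with the fixed prefix $(0^{2\ell}1)^{2\ell}$, so nothing other than a $0$ can legitimately follow any $\$^W$-block, and more generally no wrong continuation of the second $\enc$ block can be realized. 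Consequently $w_{2W+1},\ldots,w_{3W}$ are forced to trace the last $W$ vertices of $\phi((u_0,v_0))$, and $x = w_{3W} = \phi(v_0)$, a vertex that was already marked in $D'$ prior to any substitutions. This contradicts the assumption that $x$ is new.

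The main obstacle is the uniqueness-of-out-edges claim used in the third paragraph: for every offset $j$ and every $\alpha$ different from the next symbol of $\enc(L(v_0))$, one must rule out the existence of a vertex in $D'$ carrying the hypothetical shifted implicit label. The casework splits on $\alpha \in \{0,1,\#,\$\}$ and uses the fact that $D'$ only contains the label patterns prescribed by the reduction's path template; the rigidity coming from the fixed prefix $(0^{2\ell}1)^{2\ell}$ of $\enc(\cdot)$ is what makes each case collapse. The rest of the argument is then a bookkeeping consequence of Lemma~\ref{lem:dollar_signs} and the structural analysis already developed for Lemmas~\ref{lem:walk_length}--\ref{lem:path_iff_in_D}.
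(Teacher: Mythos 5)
Your proof follows a genuinely different and substantially heavier route than the paper's, and it has both an unclosed gap and an imprecision that should be flagged.

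The paper's proof is a short consequence of two facts you already have available: (a) by well-definedness of implicit labels (property (iii) of Lemma~\ref{lem:is_debruijn}), \emph{every} length-$(k-1)$ walk ending at a vertex $x$ matches the same string pre-substitution, namely $x$'s implicit label; and (b) by Lemma~\ref{lem:dollar_signs}, only $0/1$-labelled vertices are ever substituted. So if a walk $w_1,\ldots,w_{3W}=x$ spells $\enc(L(u))\$^W\enc(L(u))$ post-substitution, then pre-substitution it spelled $x$'s implicit label, and since no $\$$ or $\#$ was created or destroyed, $x$'s implicit label already had $\$$'s exactly in positions $W{+}1,\ldots,2W$ and no $\#$'s anywhere. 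Inspecting the catalogue of implicit labels produced by the reduction, only the originally marked vertices have that $\$$/$\#$ profile. That is the whole argument; no localization of the $\$^W$ block or forward walk-tracing is needed. You did not invoke well-definedness, which is the key simplifying ingredient, and instead reconstructed the walk by hand through the graph's structure.

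Beyond the extra labor, two concrete issues remain. First, the uniqueness-of-out-edges casework you explicitly defer (``For each $j$ \ldots I would verify\ldots'') is not a formality; it is the crux of your argument and is left unproven, so the proposal as written is incomplete. Second, your localization step asserts that ``the first $W-1$ of these $\$$-vertices [are] private to each path.'' This is not accurate after merging: vertices with implicit labels $\enc(L(u_1))[j,W]\#^W\enc(L(v))\$^{j-1}$ and $\enc(L(u_2))[j,W]\#^W\enc(L(v))\$^{j-1}$ coincide whenever $\enc(L(u_1))[j,W]=\enc(L(u_2))[j,W]$, which happens for $j>W-\ell$ when $\bin(L(u_1))$ and $\bin(L(u_2))$ share a suffix (this is precisely the ``interesting case'' noted in the proof of Lemma~\ref{lem:is_debruijn}). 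So only roughly the first $W-\ell$ $\$$-vertices are private to the tail $u$; the remaining ones can be shared between distinct edges into the same head $v$. Your conclusion that the walk ends at $\phi(v_0)$ is still reachable because the \emph{last} $\$$-vertex, $\#^W\enc(L(v_0))\$^W$, is a common merge point, but the ``private'' framing as stated is false and should be repaired if you pursue this route. Overall the approach could likely be made to work, but it re-derives a lot of graph structure that the well-definedness lemma gives you for free.
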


\begin{proof}
Pre-substitution, only marked vertices have implicit labels of the form $S_1 \$^W S_2$ where $S_1$ and $S_2$ contain no $\$$ symbols. Hence, the only way that a vertex could have a walk ending at it that matches a pattern of that form post-substitution is if either it was originally a marked vertex, or some non-0/1-symbols were substituted in $D'$. However, by Lemma \ref{lem:dollar_signs} the latter case cannot happen, and only originally marked vertices have walks ending at them matching strings of the form $S_1 \$^W S_2$ post-substitution.
\end{proof}

\begin{lemma}
\label{lem:visit_once}
If $P$ can be matched in $D'$ with at most $\delta$ substitutions to vertex labels of $D'$, then each originally marked vertex in $D'$ is visited exactly once, except for an originally marked vertex at the end of a path matching $\enc(0)\$^W\enc(0)$ that is visited twice.
\end{lemma}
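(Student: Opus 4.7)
The plan is to show that whenever two distinct blocks of $P$ are matched at the same originally marked vertex, they must carry the same index $i_t$. Because the index sequence $i_1,\ldots,i_{n+1}=0,1,\ldots,n-1,0$ has $0$ as its only repeated value and $0$ appears only at $t=1$ and $t=n+1$, this observation combined with pigeonhole will force exactly the visit pattern claimed. By Lemmas~\ref{lem:dollar_signs} and~\ref{lem:remains_marked}, the $n+1$ walk positions $4W, 8W, \ldots, 4W(n+1)$ are occupied by originally marked vertices $\phi(v_1),\ldots,\phi(v_{n+1})$, and by Lemma~\ref{lem:walk_length} consecutive visits are joined by a path of length exactly $4W$ that contains no other marked vertex.

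The core step is a rigidity claim: if $\phi(v_{t_1})=\phi(v_{t_2})=\phi(v)$ with $t_1\neq t_2$, then for every $s$ with $0\le s\le W-1$ the vertex occupying walk position $4Wt_1-s$ equals the vertex occupying walk position $4Wt_2-s$. Indeed, the pre-substitution labels filling block $t_j$'s segment (walk positions $4W(t_j-1)+1$ through $4Wt_j$) spell exactly $\#^W\enc(L(v))\$^W\enc(L(v))$: Lemma~\ref{lem:dollar_signs} freezes every $\#$- and $\$$-position, the $(0^{2\ell}1)^{2\ell}$ prefix of each $\enc(\cdot)$ is common to all indices and thus already agrees with $P$, and the two $\bin(L(v))$ windows are forced because they sit inside the intrinsic implicit label $\enc(L(v))\$^W\enc(L(v))$ of $\phi(v)$. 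Whenever $s\le W-1$, the $3W$-character window $[4Wt_j-s-3W+1,\;4Wt_j-s]$ that defines the pre-substitution implicit label of the vertex at walk position $4Wt_j-s$ lies entirely inside block $t_j$'s segment, so it is a function only of $L(v)$ and $s$. Uniqueness of implicit labels (property~(i) of Lemma~\ref{lem:is_debruijn}) then identifies the two vertices.

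Applying the rigidity claim with $s=0,1,\ldots,\ell-1$ shows that the $\ell$ vertices forming the second copy of $\bin(i_{t_j})$ in block $t_j$ coincide across the two visits. Each such vertex carries a single label after substitution, so $P[4Wt_1-\ell+1,\,4Wt_1]=P[4Wt_2-\ell+1,\,4Wt_2]$, i.e.\ $\bin(i_{t_1})=\bin(i_{t_2})$, forcing $i_{t_1}=i_{t_2}$. Because $0$ is the only repeated index in $i_1,\ldots,i_{n+1}$ and it repeats exactly twice (at $t=1$ and $t=n+1$), no originally marked vertex can be visited three or more times, and any vertex visited twice must be visited precisely at blocks $1$ and $n+1$. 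Pigeonhole on $n+1$ visits into at most $n$ marked vertices yields at least one repetition, which must therefore be this unique pair; the remaining $n-1$ single visits account for the remaining $n-1$ marked vertices, each exactly once.

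The main obstacle is the rigidity claim. The temptation is to read the implicit labels of the precursor vertices directly off of $P$, but $P$ records post-substitution labels, whereas implicit labels are determined by the pre-substitution structure of $D'$. The argument therefore has to establish that the pre-substitution content of block $t_j$'s segment is completely determined by $L(v)$, which is where Lemma~\ref{lem:dollar_signs}, the common $(0^{2\ell}1)^{2\ell}$ prefix of the $\enc(\cdot)$'s, and the intrinsic implicit label of $\phi(v)$ all need to be combined.
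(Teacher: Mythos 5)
Your proof is correct, and while it arrives at the same underlying fact as the paper---that two visits to the same originally marked vertex force the two corresponding block indices in $P$ to coincide, after which the index sequence $0,1,\ldots,n-1,0$ and a pigeonhole count finish the job---the way you establish that fact is cleaner and more self-contained than the paper's. The paper appeals to the ``unique, branchless path of length $W$'' feeding each marked vertex (essentially Observation~1 from the proof of Lemma~\ref{lem:dollar_signs}) and then argues somewhat informally that revisiting would require taking a $\$^W$ path back, which the outgoing $\#$-edges forbid; you instead prove a rigidity claim directly from the de Bruijn properties of Lemma~\ref{lem:is_debruijn}: Lemma~\ref{lem:dollar_signs} plus the well-definedness of implicit labels (property~(iii)) pin down the pre-substitution content of each block's $4W$-segment as $\#^W\enc(L(v))\$^W\enc(L(v))$, and uniqueness of implicit labels (property~(i)) then forces the last $W$ vertices of any two coincident blocks to be the same vertices of $D'$. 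This buys you a derivation that does not lean on an auxiliary structural observation buried inside another lemma's proof, and it replaces the paper's somewhat hand-wavy ``$\$^W$ taken back'' step with an airtight identification of vertices. Your pigeonhole closing ($n+1$ visits into $n$ marked vertices, with at most one admissible collision, located at $t=1$ and $t=n+1$) is also a tidier way to get both ``at most once except one'' and ``at least once'' simultaneously, whereas the paper handles these as two separate sub-arguments. One small caveat worth flagging: the opening sentence attributing the placement of originally marked vertices at walk positions $4W,8W,\ldots,4W(n+1)$ to Lemmas~\ref{lem:dollar_signs} and~\ref{lem:remains_marked} alone elides some work (one also needs Lemma~\ref{lem:walk_length} and the characterization of implicit labels to rule out a marked vertex sitting at a non-multiple of $4W$ and to see that every such position is in fact marked), but the paper's proof makes the same implicit assumption, and the needed machinery is available.
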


\begin{proof}
First, we show that all marked vertices, except the one with implicit label $\enc(0)\$^W\enc(0)$, are visited at most once. Pre-substitution, a marked vertex with implicit label $\enc(i) \$^W\!\enc(i)$ is at the end of a unique, branchless path of length $W$ matching $\enc(i)$. By Lemma \ref{lem:dollar_signs}, the only substitutions to this path made while matching $P$ are substitutions making it match $\enc(i')$, $i' \neq i$. If this path were modified to match $\enc(i')$, $i' > 0$, then the only way the marked vertex could be visited twice while matching $P$ is if after traversing the path, another path matching $\$^W$ is taken back to the start of this $\enc(i')$ path. However, any edges leaving this marked vertex are labeled with $\#$, making this impossible. By similar reasoning, the path matching $\enc(0)$ ending at a marked vertex is visited at most twice. We now show that each marked vertex is visited at least once. Suppose some marked vertex is not visited. By Lemma \ref{lem:remains_marked}, no additional marked vertices are created. Hence, a marked vertex ending a path matching $\enc(i)$, $i > 0$ is visited at least twice, or a marked vertex ending a path matching $\enc(0)$ is visited at least three times, a contradiction.
\end{proof}

\begin{lemma}
\label{lem:match_implies_hampath}
If $P$ can be matched in $D'$ with at most $\delta$ substitutions to vertex labels of $D'$, then $D$ has a Hamiltonian cycle.
\end{lemma}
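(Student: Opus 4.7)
The plan is to read off a Hamiltonian cycle of $D$ directly from the walk in $D'$ that matches $P$ after the $\delta$ substitutions; call it $\Gamma$. First I would identify exactly where $\Gamma$ touches marked vertices. The pattern $P$ consists of $n+1$ blocks of the form $\#^W\enc(\cdot)\$^W\enc(\cdot)$, each of length $4W$, so the only positions $j$ at which $P[j-k+1,j]$ can carry a $\$^W$ in the middle $W$ positions (the location forced by the implicit label of any marked vertex) are the $n+1$ block boundaries $j = 4W(t+1)$ for $t = 0, 1, \ldots, n$. By Lemma \ref{lem:dollar_signs} every $\$$ of $P$ is matched against an unsubstituted $\$$, so at every such $j$ the vertex of $\Gamma$ has post-substitution implicit label of the form $\enc(i)\$^W\enc(i)$ and is thus post-substitution marked; Lemma \ref{lem:remains_marked} then guarantees it is in fact originally marked.

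Next I would combine this count with Lemma \ref{lem:visit_once}: the $n+1$ marked visits of $\Gamma$ encounter $n$ distinct originally marked vertices, with exactly one visited twice, namely the one whose post-substitution length-$k$ walk spells $\enc(0)\$^W\enc(0)$. Because $\enc(0)\$^W\enc(0)$ occurs as a length-$k$ substring of $P$ only at the first and last blocks, the doubly visited marked vertex must be $v^{(0)} = v^{(n)}$, where $v^{(0)}, v^{(1)}, \ldots, v^{(n)}$ denote the marked visits of $\Gamma$ in order. Writing $v^{(i)} = \phi(u_i)$ for $u_i \in V$, the vertices $u_0, u_1, \ldots, u_{n-1}$ are a permutation of $V$ with $u_n = u_0$.

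Finally, the subwalk of $\Gamma$ between $v^{(i)}$ and $v^{(i+1)}$ has length exactly $4W$ and, by the first step, contains no interior marked vertex. Lemma \ref{lem:walk_length} upgrades this subwalk to a path, and Lemma \ref{lem:path_iff_in_D} then produces an edge $(u_i, u_{i+1}) \in E$. Concatenating these $n$ edges gives the Hamiltonian cycle $u_0, u_1, \ldots, u_{n-1}, u_0$ in $D$. The main subtlety is the first step: pinning down that marked-vertex visits occur at \emph{exactly} the $n+1$ block boundaries requires combining the $\$$-rigidity of Lemma \ref{lem:dollar_signs} with the guarantee from Lemma \ref{lem:remains_marked} that no new marked vertices are created; once this is in place, the remaining structural lemmas do all the work.
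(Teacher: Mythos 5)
Your proof is correct and follows the same route as the paper: both use Lemma \ref{lem:visit_once} to account for the marked-vertex visits and Lemma \ref{lem:path_iff_in_D} to convert the marked-to-marked subwalks into edges of $D$. The paper's proof is a two-sentence summary of exactly this argument; your version fills in the details it leaves implicit (the block boundary positions $j = 4W(t+1)$, the role of Lemmas \ref{lem:dollar_signs} and \ref{lem:remains_marked} in pinning those positions down, and the explicit invocation of Lemma \ref{lem:walk_length} to upgrade subwalks to paths before applying Lemma \ref{lem:path_iff_in_D}).
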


\begin{proof}
By Lemma \ref{lem:path_iff_in_D}, the paths between marked vertices traversed while matching with $P$ correspond to edges between vertices in $D$. Combined with marked vertices being visited exactly once from Lemma \ref{lem:visit_once} (except the marked vertex ending a path matching $\enc(0)$), the walk matched by $P$ in $D'$ corresponds to a Hamiltonian cycle through $D$ beginning and ending at the vertex labeled $0$.
\end{proof}

This completes the proof of Theorem \ref{thm:NPC_hardness}. To see that $k = \Theta(\log^2 |V'|)$, first recall that $|V|$ is the number of vertices in the original graph, where we assumed $|E| = \mathcal{O}(|V|)$. At most $4W|E| = \mathcal{O}(k|V|)$ vertices are created in the reduction. Also, the proof of Lemma \ref{lem:dollar_signs} establishes that there is a unique set of at least $\Theta(k)$ vertices for every marked vertex, each one corresponding to a vertex in the original graph. Combining, we have that $|V'| = \Theta(k|V|)$. By construction, $k = \Theta(\log^2 |V|)$, and since $|V'| = \Theta(k|V|)$, $k = \Theta(\log^2 |V'|)$ as well.

%% Add later perhaps

% \subsubsection{NP-Completeness for Edge Label Formulation}
% \label{sec:edge_labels}

% The reduction can be perform as it was in the previous section. Following the transformation the vertex label assigned to vertex $v$ can now be assigned to the edges with head $v$.

% We will show that if edge labels are considered rather than vertex labels, no walk matching $P$ with at most $\delta$ substitutions will traverse the same vertex $v$ twice with using edges $(u,v)$ and $(u', v)$ where $u$ and $u'$ have different labels (post-substitution). This implies that no additional potential solutions are introduced by this problem formulation.
% Examining the cases in Lemma \ref{lem:is_debruijn}, only vertices with implicit labels of the form $\enc(L(u))[i,W]\#^W \enc(L(v))\$^{i-1}$, $1 < i \leq W$ and $\#^W \enc(L(v))\$^W$ can have in degree greater than $1$. All of these edges are labeled with $\$$.
% By the same argument used in the proof of Lemma \ref{lem:dollar_signs}, substituting one of these $\$$ symbols on an edge with a non-$\$$ symbol would necessitate a `shift' to be repeated across all path matching $\enc(L(u))\#^W \enc(L(u))[1,W-\ell]$ for all $u \in V$. Just as in the proof of Lemma \ref{lem:dollar_signs}, the number of substitution required to do this is greater than $\delta$.

\section{Hardness for Problem 2 on De Bruijn Graphs}
\label{sec:SETH_hardness}

\subsubsection{Reduction}

The Orthogonal Vectors Problem is defined as follows: given two sets of binary vectors $A, B \subseteq \{0,1\}^d$ where $|A| = |B| = N$, determine whether there exists vectors $a \in A$ and $b \in B$ such that their inner product is zero. Conditioned on SETH, a standard reduction shows that this cannot be solved in time $d^{\Theta(1)}N^{2 - \varepsilon}$ for any constant $\varepsilon > 0$~\cite{DBLP:conf/iwpec/Williams15}.

\begin{figure}
    \centering
    \includegraphics[width=\textwidth]{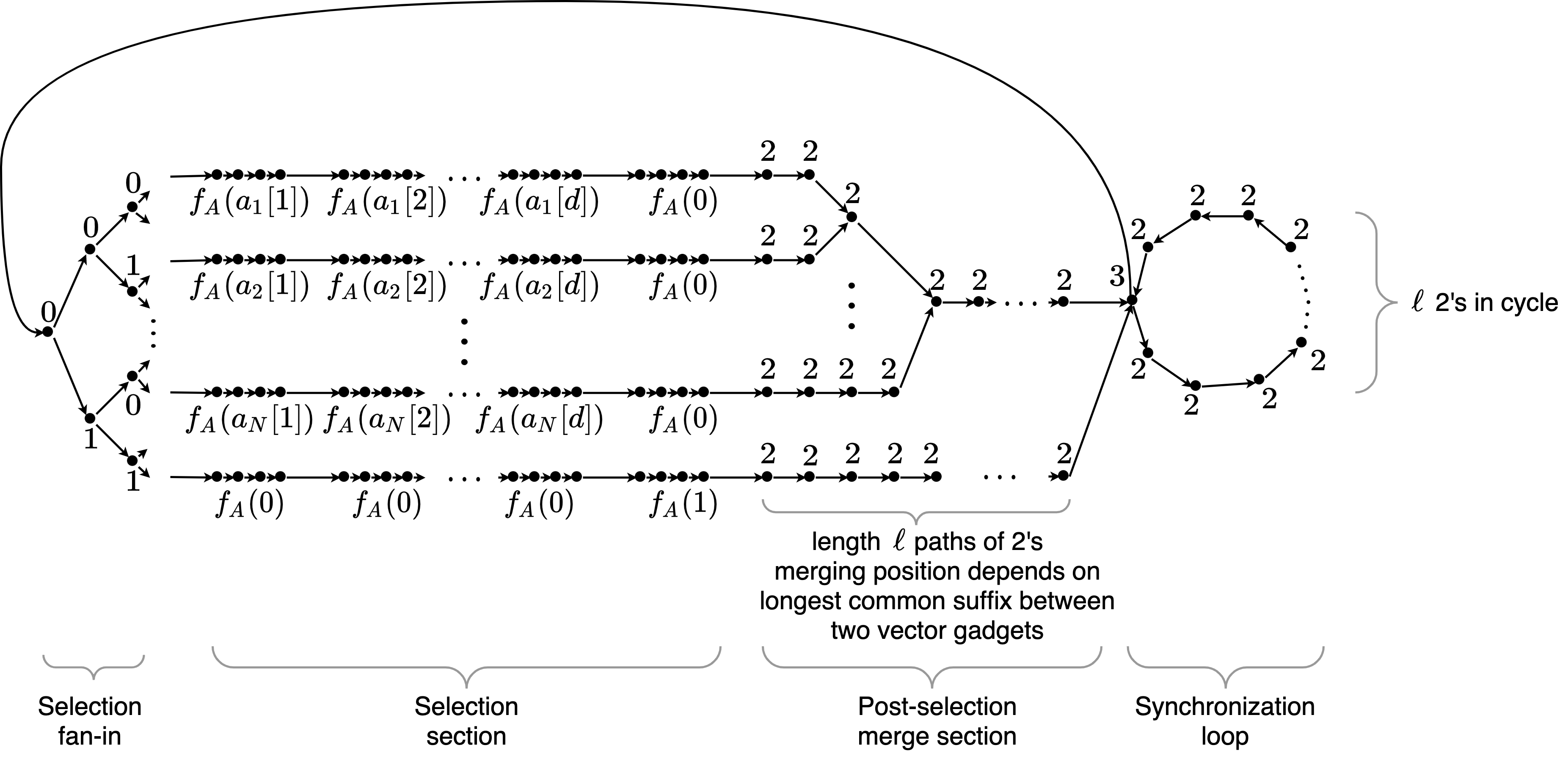}
    \caption{An illustration of the reduction from OV to Problem 2.}
    \label{fig:SETH_reduction}
    \vspace{-1.8em}
\end{figure}

Let the given instance of OV consist of $A, B \subseteq \{0,1\}^d$ where $|A| = |B| = N =  2^{m}$ for some natural number $m$. Hence, we have $\left\lceil \log (N+1) \right\rceil =  \log N + 1$. This will ease computation later. We also assume that $d > \log N$. This is reasonable, as if $d \leq \log N$, then $|A|$ and $|B|$ would contain either all vectors of length $d$ or repetitions. 

We will next provide a formal description of the graph $D$ our reduction creates from the set $A = \{a_1, a_2, ..., a_N\}$ and the pattern $P$ it creates from the set $B = \{b_1, b_2, ..., b_N\}$. The reader may find Figure \ref{fig:SETH_reduction} helpful. The graph will consist of four sections. We name these according to their function in the reduction: the Selection fan-in, the Selection section, the Post-Selection merge section, and the Synchronization loop. 

We start with the Selection fan-in. Let $2^c$ be the smallest power of $2$ such that $2^c \geq N+1$. The Selection fan-in consists of a complete binary tree with $2^c$ leaves where all paths are directed away from the root. The root is labeled $0$ and the children of every node are labeled $0$ and $1$, respectively.

The Selection section consists of $N+1$ paths. We first define the mappings $f_A$ and $f_B$ from $\{0,1\}$ to sequences of length four as $f_A(0) = 1100$, $f_A(1) = 1111$, $f_B(0) = 0110$, $f_B(1) = 0000$. These mappings have the property that
$d_H(f_A(0), f_B(0)) = d_H(f_A(0), f_B(1)) = d_H(f_A(1), f_B(0)) = 2$ and $d_H(f_A(1), f_B(1)) = 4.$
We make the $i^{th}$ path for $1 \leq i \leq N$ a path of $4(d+1)$ vertices with labels matching the string $f_A(a_i[1]) f_A(a_i[2]) ... f_A(a_i[d]) f_A(0)$. We make the $(N+1)^{th}$ path have $4(d+1)$ vertices and match the string $f_A(0)^{d} f_A(1)$. Let $s_i$ denoted the start vertex of path $i$. We arbitrarily choose $N+1$ leaves, $l_1$, $l_2$,..., $l_{N+1}$, from the Selection fan-in and add the edges $(l_i, s_i)$ for $1\leq i \leq N+1$.

We define the implicit label size as $k = \lceil \log(N + 1)\rceil + 4(d+1)$ and $\ell = k-1$. To construct the Post-selection merge section, we start with $N+1$ length $\ell-1$ paths, each matching the string $2^\ell$. For every path in the Selection section, we add an edge from the last vertex in the path to one of the paths matching $2^\ell$. This is done so that every path matching $2^\ell$ in the Post-selection merge section is connected to exactly one path from the Selection section. Next, we merge two vertices if they have the same implicit label. This is repeated until all vertices in the Post-selection merge section have a unique implicit label.

To construct the Synchronization loop we create a directed cycle with $\ell + 1 = k$ vertices. One of these is labeled with the symbol $3$, and the rest with the symbol $2$. Edges from each ending vertex in the Post-selection Merge section to the vertex labeled $3$ are then added. A final edge from the vertex labeled $3$ to the root of the binary tree in the Selection fan-in completes the graph, which we denote as $D$.

Let  $t = 5d + \lceil \log(N+1) \rceil$. To complete the reduction, we make the pattern 
\begin{align*}
    P = &(2^{\ell}3)^t~2^{\lceil \log (N+1) \rceil} f_B(b_1[1]) f_B(b_1[2]) \hdots  f_B(b_1[d]) f_B(1)\\
    &(2^{\ell}3)^t~2^{\lceil \log (N+1) \rceil} ~f_B(b_2[1]) f_B(b_2[2]) \hdots f_B(b_2[d]) f_B(1)\\&\hdots\\
    &(2^{\ell}3)^t~2^{\lceil \log (N+1) \rceil} f_B(b_N[1]) f_B(b_N[2]) \hdots f_B(b_N[d]) f_B(1)
\end{align*}
and the maximum number of allowed substitutions
$\delta = N \lceil \log_2(N + 1)\rceil + 2(d+1) + (2d + 4)(N-1).$

We call substrings in $P$ of the form $f_B(b_i[1]) f_B(b_i[2]) \hdots  f_B(b_i[d]) f_B(1)$ and paths in $D$ matching strings of the form $f_A(a_i[1]) f_A(a_i[2]) ... f_A(a_i[d]) f_A(0)$ vector gadgets.
Note that $|E| = \mathcal{O}(dN)$ and $m = |P| = \mathcal{O}(d^2 N)$. Hence, an algorithm for approximate matching running in time $\mathcal{O}(m|E|^{1-\varepsilon} + m^{1-\varepsilon}|E|)$ for some $\varepsilon > 0$ would imply an algorithm for OV running in time $d^{\Theta(1)}N^{2-\varepsilon}$. This implies that once the correctness of the reduction has been established, Theorem \ref{thm:SETH_hardness} follows.

\subsection{Proof of Correctness}
\label{sec:SETH__proof_of_correctness}

Proofs of Lemma \ref{lem:seth_is_debruijn} and Lemma \ref{lem:3s_to_3s} are given in Appendix \ref{sec:appendix_seth}.

\begin{restatable}{lemma}{sethIsDebruijn}
\label{lem:seth_is_debruijn}
The graph $D$ is a de Bruijn graph.
\end{restatable}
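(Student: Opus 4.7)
The plan is to verify the three conditions required for $D$ to be (an induced subgraph of) the full order-$k$ de Bruijn graph over $\Sigma = \{0,1,2,3\}$: (i) every length-$(k{-}1)$ walk ending at a vertex $v$ spells the same $k$-character string, which we take as the implicit label of $v$; (ii) distinct vertices receive distinct implicit labels; and (iii) whenever two vertices $u,v \in V$ have implicit labels $S_u, S_v$ with $S_u[2..k] = S_v[1..k{-}1]$, the edge $(u,v)$ lies in $E$.

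First I would analyze the Synchronization loop. The $3$-vertex, call it $v_0$, is the only vertex in the construction with more than one in-neighbour: it is reached both from the preceding $2$-vertex of the cycle and from the last vertex of every merge path. Each such in-neighbour is itself the terminus of a length-$(k{-}1)$ all-$2$ walk (either around the cycle as $v_1, v_2, \ldots, v_{k-1}$, or through some merge path as $m_{i,1}, \ldots, m_{i,k-1}$). This immediately forces the implicit label of $v_0$ to be $2^{k-1}3$, independent of backward branching, and a direct calculation then yields the implicit label $2^{k-j-1}\,3\,2^{j}$ for the $j^{\text{th}}$ cycle vertex when $1 \le j \le k-1$.

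Next I would propagate the analysis forward through the tree, the Selection section, and the Post-selection merge section. The root has $v_0$ as its sole predecessor; every other tree or Selection vertex has a unique in-neighbour until its backward walk of length $k{-}1$ reaches an appropriate ancestor; and whenever such a walk re-enters $v_0$, the remaining backward steps all pass through $2$-labelled vertices regardless of which cycle-versus-merge choice is made. This yields closed-form implicit labels parametrised by depth and root-to-vertex bitstring for tree vertices, by $(i,j)$ for $s_{i,j}$, and by $(i,j)$ together with the appropriate shifted prefix for pre-merge $m_{i,j}$. Two pre-merge merge vertices share an implicit label iff their trailing $2$-runs have equal length and the shifted prefixes coincide, in which case their one-step predecessors also share an implicit label; hence the iterated merging preserves well-definedness at every stage and terminates with unique labels inside the merge section. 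Uniqueness across sections then follows from a short case check on the positions of the unique $3$, the unique $0$, and the trailing $2^{j}$ run in each label class.

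For condition (iii) I would proceed vertex-by-vertex: for each $v \in V$ and each $\alpha \in \Sigma$, either exhibit the unique $u \in V$ whose implicit label equals $\alpha$ concatenated with the first $k{-}1$ characters of $v$'s implicit label and verify that $(u,v) \in E$, or argue that no such $u$ exists in $D$. The principal obstacle, and the part I expect to require the most care, is the merge section after iterated merging, since a single representative may stand in for many $m_{i,j}$'s. I expect this to reduce to tracking which pre-merge copies were identified together and showing that the union of their original in-neighbour sets realises precisely the de Bruijn predecessor set demanded by the representative's implicit label. Once this bookkeeping is done, no de Bruijn-mandated edge between vertices of $D$ is missing and no spurious edge has been introduced, completing the proof.
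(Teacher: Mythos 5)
Your proposal takes essentially the same route as the paper: verify for every vertex that its implicit label is well-defined, unique, and that no de Bruijn in-edge is missing, proceeding section by section through the Synchronization loop, Selection fan-in, Selection section, and Post-selection merge section via a case analysis on implicit-label forms. One small inaccuracy worth noting: the $3$-vertex is \emph{not} the only vertex with more than one in-neighbour, since after the iterated merging the Post-selection merge section contains vertices that inherit multiple in-neighbours from the copies that were identified; but because you go on to treat the merge section separately and flag it as the delicate part, this overstatement does not derail the argument, and the rest matches the paper's proof.
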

\vspace{-2mm}

\begin{restatable}{lemma}{sethThreesToThrees}
\label{lem:3s_to_3s}
In an optimal solution, $3$'s in P are matched with $3$'s in $D$.
\end{restatable}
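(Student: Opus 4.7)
The plan is to argue that the only vertex of $D$ labeled $3$ is the single vertex $v^\star$ on the synchronization loop, and that the structure of $D$ together with the budget $\delta$ forces any optimal walk to be at $v^\star$ whenever $P$ has a $3$. Concretely, given any walk $W$ matching $P$ with at most $\delta$ substitutions, I will show that any $3$ in $P$ matched to a non-$3$ vertex can be removed at no extra cost (and in fact strictly improves the solution), contradicting optimality.

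I would first record the rigidity of the sync loop: it is a directed cycle of length $k = \ell+1$ containing exactly one $3$-vertex $v^\star$, and $v^\star$ is both the unique entry point of the loop (from the Post-selection merge section) and the unique exit point (to the Selection fan-in root). In each sync block $(2^\ell 3)^t$ of $P$ the $3$'s are spaced exactly $k$ positions apart, matching the loop's period, so once the walk is on the loop during a block, either all $t$ threes of that block are matched to $v^\star$ (correct phase) or none are (wrong phase, which introduces two mismatches per loop cycle). Next, I would argue that during each $(2^\ell 3)^t$ block the walk must lie entirely on the sync loop: the only other place in $D$ containing $\ell$ consecutive $2$-labels is the Post-selection merge section, whose paths are length-$\ell$ sinks feeding into $v^\star$, and elsewhere in $D$ labels come from $\{0,1\}$, forcing a substitution at every position. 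Since $\delta$ is calibrated to cover exactly one orthogonal vector-gadget pair plus the unavoidable mismatches in the Selection fan-in, Selection section, and $N-1$ non-orthogonal vector-gadget pairings, an off-loop detour spanning even a small portion of a sync block blows the budget.

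Finally, given that the walk lies on the loop throughout each sync block, the phase of the block is determined solely by its entry point; since re-entry is only through $v^\star$, and since the total length of the Tree + Selection section + Post-selection merge in $D$ is engineered to equal the length $2^c + 4(d{+}1) = c + 4(d{+}1)$ of the $2^c$-prefix plus vector-gadget portion of $P$ modulo $k$, the walk re-enters the loop at exactly the position needed for correct alignment with the next sync block. If an optimal walk had the wrong phase in some block, all $t$ of that block's threes would sit over $2$-vertices, incurring $2t = 10d + 2\lceil\log(N+1)\rceil$ mismatches in that block alone; rotating the walk by one position along the loop saves these $2t$ substitutions at the cost of only $O(1)$ boundary adjustments, contradicting optimality. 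The main obstacle is making this rotation water-tight: I would verify by a direct length/congruence check that shifting the loop entry by one is always realizable at the boundary between a vector gadget and the following sync block, using the explicit lengths $c+1$ (tree), $4(d+1)$ (selection section), and $\ell$ (Post-selection merge) built into the construction. Once this is established, every sync block must be correctly phased in an optimal solution, so every $3$ in $P$ is matched to $v^\star$.
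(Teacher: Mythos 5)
Your high-level strategy — exploit the period-$k$ structure of both $(2^\ell 3)^t$ and the cycle containing $v^\star$, then use budget tightness — is the right intuition, but the execution has gaps that the paper's proof avoids by a different route.

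The main problem is the rotation step you already flag as "the main obstacle." Rotating the on-loop portion of the walk by one position is not a local adjustment: the Synchronization loop has a \emph{unique} entry point and a \emph{unique} exit point, both at $v^\star$ (entry via the edges from the Post-selection merge section, exit via the single edge to the fan-in root). A walk cannot be slid by one position on the loop without changing the time at which it crosses $v^\star$, which in turn changes the entire trajectory through the tree/selection/merge sections before and after the block; there is no $O(1)$-cost boundary patch. Relatedly, your claim that "the walk re-enters the loop at exactly the position needed for correct alignment" treats the phase as fixed by the construction, but the phase of the \emph{first} block is a free choice of the walk (which vertex the walk starts at); correctness of the phase has to be argued as a \emph{cost} statement, not a structural one. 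Finally, you also have not addressed the case that the paper treats explicitly and at some length: a $3$ in $P$ matched to the final vertex of a Selection-section path. This alignment is consistent with the $\bmod\ k$ periodicity (from $v^\star$, exactly $k$ steps through the fan-in and a selection path lands on a selection-path end, and from there $k$ steps through the merge section lands back on $v^\star$), so your "walk stays on the loop during sync blocks" claim would need a real proof to exclude it — the budget-based justification as stated is too loose. (Also a small notational slip: the prefix $2^{\lceil\log(N+1)\rceil}$ of $P$ contributes $\lceil\log(N+1)\rceil$ symbols, not $2^c$; writing $2^c + 4(d{+}1) = c + 4(d{+}1)$ conflates the leaf count with the string length.)

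The paper avoids all of this by a cleaner two-step argument. First, it observes that because every walk from $v^\star$ back to $v^\star$ has length a multiple of $k$ and the $3$'s of $P$ are spaced $k$ apart, if any single $3$ of $P$ lands on a vertex other than $v^\star$ or a selection-path end, then \emph{no} $3$ of $P$ can be at $v^\star$; this costs at least $tN$ substitutions, which exceeds the cost $4d(N{+}1) + N\lceil\log(N{+}1)\rceil$ of the trivial solution that routes every vector gadget through the $(N{+}1)^{\mathrm{th}}$ path, because $t = 5d + \lceil\log(N{+}1)\rceil$ was chosen to make that comparison go through. Second, it considers the first $3$ matched to a selection-path-end vertex and shows that replacing the surrounding detour with two circuits of the Synchronization loop (starting and ending at the same vertex, hence splicing cleanly into the rest of the walk) strictly lowers the cost in both sub-cases ($3$ followed by $2^c f_B(\cdot)$ and $3$ followed by $2^\ell 3$). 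Note that the paper's "swap in two loop circuits" modification is precisely the kind of local surgery that \emph{is} available here — unlike your rotation — because it preserves the endpoints of the replaced segment. If you want to salvage your outline, the fix is to replace the rotation with this endpoint-preserving splice and to add the selection-path-end case as a separate argument.
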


\begin{lemma}
\label{lem:vector_gadgets_to_vector_gadgets}
In an optimal solution, vector gadgets in $P$ are matched with vector gadgets in $D$.
\end{lemma}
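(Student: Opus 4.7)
The plan is to use Lemma \ref{lem:3s_to_3s} as an anchor and then do a local cost comparison. By Lemma \ref{lem:3s_to_3s}, every $3$ in $P$ is matched to the unique $3$ vertex of $D$ (in the Synchronization loop) without substitution. I would focus on a single vector gadget $f_B(b_i[1]) \cdots f_B(b_i[d]) f_B(1)$ in $P$; it is bracketed by the last $3$ of the preceding $(2^\ell 3)^t$ block (at pattern position $p_1$) and the first $3$ of the following block (at position $p_2 = p_1 + 2k$). Between these two anchors the walk traverses exactly $2k$ edges starting and ending at the $3$ vertex of $D$, while the pattern reads $2^{\lceil \log(N+1) \rceil}\, f_B(b_i[1]) \cdots f_B(b_i[d])\, f_B(1)\, 2^\ell$.

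Next I would enumerate the walks of length $2k$ in $D$ from the $3$ vertex back to itself. The $3$ vertex has exactly two outgoing edges: one into the Synchronization loop (to a $2$-vertex) and one to the root of the binary tree (a $0$-vertex). Inside the Synchronization loop, the only closed walk returning to $3$ is the full cycle of length $k$. Once the walk leaves $3$ via the root of the binary tree, the directed, non-branching structure of that region---tree edges point away from the root, Selection paths are simple and non-branching, and the Post-selection merge section funnels back into $3$---forces the walk to traverse the entire binary-tree + Selection + Post-selection route before returning to $3$; by construction this route has length exactly $2k$. A length count then shows that the only admissible walks of length $2k$ from $3$ to $3$ are (a) two full traversals of the Synchronization cycle, or (b) one traversal of the binary-tree + Selection + Post-selection route; any hybrid overshoots $2k$ edges.

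I would then compare substitution costs on this segment. Under option (a) the walk visits only $2$'s apart from the midpoint $3$, so every one of the $4(d+1)$ characters of the $\{0,1\}$-valued vector gadget in $P$ is a mismatch (including the final $0$ of $f_B(1)$ aligned with the midpoint $3$), while the surrounding $2^{\lceil \log(N+1) \rceil}$ and $2^\ell$ match the loop's $2$'s for free; the cost is $4(d+1)$. Under option (b), $2^{\lceil \log(N+1) \rceil}$ is matched against $\lceil \log(N+1) \rceil$ binary-tree vertices labeled in $\{0,1\}$ at a cost of $\lceil \log(N+1) \rceil$ substitutions; the vector gadget of $P$ is matched against a Selection path with Hamming distance at most $2d+4$, which follows from $d_H(f_A(\cdot),f_B(\cdot)) \in \{2,4\}$ and $d_H(f_A(0),f_B(1)) = 2$ (for instance, the $(N+1)$-th Selection path $f_A(0)^d f_A(1)$ achieves this bound); and $2^\ell$ matches the Post-selection $2$'s for free, for a total cost of at most $\lceil \log(N+1) \rceil + 2d + 4$.

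The cost gap is at least $4(d+1) - \lceil \log(N+1) \rceil - (2d + 4) = 2d - \lceil \log(N+1) \rceil$, which is strictly positive under the reduction's assumption $d > \log N$, since $2d \geq 2 \log N + 2 > \log N + 1 = \lceil \log(N+1) \rceil$. Hence option (b) is strictly cheaper per vector gadget, so any optimal solution must use (b) everywhere; this is precisely the statement that each vector gadget of $P$ is matched with a Selection path, i.e.\ a vector gadget of $D$. The main obstacle is the enumeration step---ruling out exotic length-$2k$ walks from $3$ to $3$ that mix the two routes---which is handled by the directed non-branching structure of the binary-tree / Selection / Post-selection region together with the fixed cycle length $k$ of the Synchronization loop.
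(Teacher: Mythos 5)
Your proposal is correct and follows essentially the same route as the paper's proof: anchor on Lemma~\ref{lem:3s_to_3s} so that the matching walk is forced to pass through the $3$ vertex at the boundaries of each vector-gadget segment, then perform a per-segment cost comparison between routing that segment through the Synchronization loop (cost $4(d+1)$) versus through the fan-in and a Selection path (cost at most $\lceil \log(N+1)\rceil + 2d + 4$, realized by the $(N+1)^{th}$ path), and conclude the latter is strictly cheaper because $d > \log N$. The main thing you add is the explicit enumeration of closed walks from $3$ back to $3$ --- showing that, since the non-branching fan-in/Selection/Post-selection pipeline and the $k$-edge Synchronization cycle are the only two ways to leave and return to $3$, a walk of the required length must be either two loop traversals or one pipeline traversal. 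The paper leaves this step implicit (``by Lemma~\ref{lem:3s_to_3s}, this can only occur if some vector gadget in $P$ is matched against the Synchronization loop''); making the enumeration explicit, together with noting that the midpoint $3$ of the double-loop option aligns against a vector-gadget character and thus also contributes a substitution, tightens the argument without changing its substance.

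One small caveat worth flagging (it concerns the paper's construction rather than your reasoning): the lengths as literally written do not quite line up --- the pipeline from $3$ through the fan-in ($c$ edges plus the $3\!\to$root edge), a Selection path ($4(d+1)-1$ edges plus the connecting edge), and a Post-selection path ($\ell-1$ edges plus two connecting edges) sums to $2k+1$ edges, while the pattern between consecutive $3$'s around a vector gadget demands $2k$ edges. Your enumeration implicitly assumes this pipeline has length exactly $2k$; that assumption is what the construction clearly intends (and is what makes the reduction work), and the discrepancy is a one-off accounting slip in the paper's description of the Post-selection paths, not a flaw in your proof.
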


\begin{proof}
Suppose otherwise. By Lemma \ref{lem:3s_to_3s}, this can only occur if some vector gadget in $P$ is matched against the Synchronization loop.
This requires at least $4(d+1)$ substitutions. We can instead match the $\lceil \log(N+1) \rceil$ $2$'s preceding the vector gadget in $P$ with the Selection fan-in and the vector gadget in $P$ with the $(N+1)^{th}$ path in the Selection section. Due to $d_H(f_A(0), f_B(0)) = d_H(f_A(0), f_B(1)) = 2$ and $d(f_A(1), f_B(1)) = 4$, this requires $\lceil \log(N+1) \rceil + 2d + 4$ substitutions in $P$. Since, $\log N < d < 2d$ we have $\log N < 2d - 1$. Using that $N$ is some power of $2$, $\lceil \log(N+1) \rceil + 2d + 4 = \log N  + 1 + 2d + 4 < 4d + 4.$ Hence, the cost decreases by matching the vector gadget in $P$ to a vector gadget in $D$ instead.
\end{proof}

\begin{lemma}
If there exists a vector $a \in A$ and $b \in B$ such that $a \cdot b = 0$, then $P$ can be matched to $D$ with at most $\delta$ substitutions.
\end{lemma}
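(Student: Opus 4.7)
The plan is to exhibit an explicit walk in $D$ whose label sequence differs from $P$ in exactly $\delta$ positions, using the orthogonal pair to cheaply match one of the $N$ vector gadgets of $P$ while paying a uniform, sub-optimal cost on the remaining $N - 1$.

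Fix $a = a_i \in A$ and $b = b_{j^*} \in B$ with $a \cdot b = 0$. Block $j$ of $P$ is $(2^{\ell}3)^t\, 2^{\lceil \log(N+1) \rceil}$ followed by the vector gadget $f_B(b_j[1]) \cdots f_B(b_j[d])\, f_B(1)$, and the walk will traverse each block via the Synchronization loop (or, after the first block, via the Post-selection merge path into the loop), the Selection fan-in, and one chosen Selection section path: the $i$-th path for block $j^*$, and the $(N+1)$-th path (whose labels match $f_A(0)^d f_A(1)$) for every other block. The first block is initiated by starting the walk at the $2$-vertex immediately after the $3$-vertex in the loop, cycling the loop enough times to consume $(2^{\ell}3)^t$, then taking the edge from the $3$-vertex through the fan-in to the appropriate leaf and into the chosen Selection section path. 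Every subsequent block is entered by crossing from the preceding Selection section endpoint into its attached merge path, thence to the $3$-vertex, around the loop the required number of additional times, and back down the fan-in. Re-using the loop, the fan-in, and the merge region across the $N$ blocks is permitted because we are constructing a walk, not a path.

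Now count the substitutions. The loop and merge traversals read the expected $2$'s and $3$'s in their correct positions and therefore contribute nothing. Each of the $N$ fan-in descents reads $\lceil \log(N+1) \rceil$ symbols drawn from $\{0,1\}$ against the $\lceil \log(N+1) \rceil$ $2$'s preceding the corresponding vector gadget in $P$, contributing $N \lceil \log(N+1) \rceil$ substitutions in total. In block $j^*$, orthogonality forces $a[r] = 0$ or $b[r] = 0$ for every $r$, so $d_H(f_A(a[r]), f_B(b[r])) = 2$ for every $r \in [1,d]$; together with the trailing pair $d_H(f_A(0), f_B(1)) = 2$ this block costs $2(d+1)$. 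Each remaining block matches its vector gadget against $f_A(0)^d f_A(1)$ at cost $d \cdot d_H(f_A(0), f_B(\cdot)) + d_H(f_A(1), f_B(1)) = 2d + 4$, so these contribute $(2d+4)(N-1)$ collectively. Summing gives $N \lceil \log(N+1) \rceil + 2(d+1) + (2d+4)(N-1) = \delta$.

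The main technical obstacle will be certifying that the proposed walk genuinely exists in $D$ after the vertex merges that define the Post-selection merge section. This follows from the construction itself: each Selection section endpoint was given an outgoing edge into a specific merge path, the merging operation only identifies vertices sharing an implicit label and so preserves every incident edge, and every merge path endpoint received an edge to the $3$-vertex of the Synchronization loop. Consequently, whichever Selection section path is chosen in a given block, the walk always has a route back into the loop, and the edge from the $3$-vertex to the root of the fan-in makes the next block's descent possible.
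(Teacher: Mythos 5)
Your proof is correct and takes essentially the same approach as the paper: choose the $i$-th Selection-section path for the block of $P$ corresponding to the orthogonal vector $b_{j^*}$, route every other block through the $(N+1)$-th path, and sum the three cost components (fan-in descents, the orthogonal block at $2(d+1)$, and the remaining blocks at $2d+4$ each) to get exactly $\delta$. The paper states this more tersely and does not spell out the traversal of the Synchronization loop and Post-selection merge region, but the accounting and the key idea are identical.
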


\begin{proof}
Match the vector gadget for $b$ in $P$ with the vector gadget for $a$ in the Selection section of $D$. This costs $2(d+1)$ substitutions. Match the remaining $N-1$ vector gadgets in $P$ with the $(N+1)^{th}$ path in the Selection section, requiring $(2d + 4)(N-1)$ substitutions in total. The total number of substitutions of $2$'s in $P$ to match the Selection fan-in is $N\lceil \log (N+1)\rceil$. Adding these, the total number of substitutions is exactly $\delta$. The synchronization loop can be used for matching all additional symbols in $P$ without any further substitutions.
\end{proof}

\begin{lemma}
\label{lem:delta_implies_orthogonal}
If $P$ can be matched in $D$ with at most $\delta$ substitutions, then there exists vectors $a \in A$ and $b \in B$ such $a \cdot b = 0$.
\end{lemma}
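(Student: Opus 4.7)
The plan is to leverage Lemmas~\ref{lem:3s_to_3s} and~\ref{lem:vector_gadgets_to_vector_gadgets} to pin down the gross structure of any optimal walk, carry out the same cost accounting used to establish $\delta$ in the constructive direction, and then read an orthogonal pair directly off of the budget inequality.

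By Lemma~\ref{lem:3s_to_3s} every $3$ in $P$ is aligned with the unique $3$-labelled vertex of $D$, which lies inside the Synchronization loop of length $\ell+1 = |2^{\ell}3|$; this locks the walk's phase within the loop and makes each $(2^{\ell}3)^{t}$ block contribute zero substitutions. By Lemma~\ref{lem:vector_gadgets_to_vector_gadgets} each of the $N$ vector gadgets of $P$ is aligned with some Selection-section path $\pi_i$, which is either (a) an ``$a_{k_i}$-path'' whose labels spell $f_A(a_{k_i}[1])\cdots f_A(a_{k_i}[d])f_A(0)$, or (b) the special $(N{+}1)^{th}$ path whose labels spell $f_A(0)^{d}f_A(1)$. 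Because the only way the walk can reach the Selection section from the loop is through the single arc from the $3$-vertex into the fan-in root, each of the $N$ segments between two consecutive $3$-blocks must contain a full fan-in descent, forcing $\lceil \log(N+1)\rceil$ substitutions of a $2$ in $P$ against a $0/1$ label; this contributes $N\lceil \log(N+1)\rceil$ substitutions in total.

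Using the Hamming identities $d_H(f_A(0),f_B(0)) = d_H(f_A(0),f_B(1)) = d_H(f_A(1),f_B(0)) = 2$ and $d_H(f_A(1),f_B(1)) = 4$, the cost of aligning the $i^{th}$ vector gadget with an $a_{k_i}$-path is exactly $2d+2+2(a_{k_i}\cdot b_i)$, while the cost of aligning it with the $(N{+}1)^{th}$ path is exactly $2d+4$. Letting $K \in \{0,\ldots,N\}$ be the number of vector gadgets aligned with ``real'' $a_{k_i}$-paths, the total substitution count simplifies, using $\delta = N\lceil\log(N{+}1)\rceil + (2d+2) + (2d+4)(N-1)$, to
\begin{align*}
\text{total cost} \;=\; \delta \;+\; 2\Bigl(1 \;-\; K \;+\; \sum_{i} (a_{k_i}\cdot b_i)\Bigr).
\end{align*}
The hypothesis ``total cost $\leq \delta$'' then forces $\sum_i (a_{k_i}\cdot b_i) \leq K-1$, whence $K\geq 1$; and since each summand is a nonnegative integer appearing in a sum of exactly $K$ terms, at least one pair satisfies $a_{k_i}\cdot b_i = 0$. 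This pair is the desired orthogonal witness.

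The step I expect to be the main obstacle is the rigidity claim used in the second paragraph: rigorously ruling out a ``clever'' walk that shortcuts a fan-in descent, partially bypasses the loop, or re-aligns the $(2^{\ell}3)^{t}$ blocks so that some of the $N\lceil \log(N+1)\rceil$ fan-in substitutions can be avoided. This rests on three structural facts about the construction that must be made explicit: (i) the only arc leaving the Synchronization loop is the one from its $3$-vertex to the fan-in root; (ii) the only arcs entering the loop come from Post-selection merge endpoints; and (iii) the loop length $\ell+1$ matches $|2^{\ell}3|$ exactly, so the $N$ occurrences of $3$ in $P$ deterministically pin the walk's phase within the loop. Once these are in hand, the arithmetic accounting above closes the argument.
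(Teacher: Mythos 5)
Your proof is correct and uses the same ingredients as the paper's: Lemma~\ref{lem:3s_to_3s} and Lemma~\ref{lem:vector_gadgets_to_vector_gadgets} to pin down the gross structure of an optimal walk, the four Hamming identities for $f_A,f_B$, and the value of $\delta$. The one presentational difference worth noting is that the paper argues by \emph{contrapositive}: assume no orthogonal pair exists, observe that then every vector-gadget alignment (whether to an $a_j$-path or to the $(N{+}1)^{th}$ path) costs at least $2d+4$, and conclude the total is at least $N(2d+4) + N\lceil\log(N{+}1)\rceil = \delta + 2 > \delta$. You instead compute the cost \emph{exactly}, arriving at the identity $\mathrm{cost} = \delta + 2\bigl(1 - K + \sum_i a_{k_i}\cdot b_i\bigr)$, and then extract a witness from $\sum_i a_{k_i}\cdot b_i \le K-1$ by pigeonhole. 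The two are logically equivalent; yours is slightly longer but makes explicit exactly where the slack of $2$ in $\delta$ comes from, which is pleasant. Two small remarks. First, the ``rigidity'' worry you flag at the end is not actually an open gap in your argument: once Lemmas~\ref{lem:3s_to_3s} and \ref{lem:vector_gadgets_to_vector_gadgets} force $3$'s onto $3$'s and vector gadgets onto vector gadgets, the only way into the Selection section is through the single arc from the $3$-vertex into the fan-in root, the fan-in arcs are directed away from the root, and the loop has length exactly $k$, so each of the $N$ pattern segments $2^{\lceil\log(N+1)\rceil}f_B(\cdots)$ must traverse a full root-to-leaf descent. Second, for the argument you only need the fan-in contribution to be \emph{at least} $N\lceil\log(N{+}1)\rceil$ (and the gadget costs to be lower bounds); an inequality suffices to reach $\sum_i a_{k_i}\cdot b_i \le K-1$, and stating it that way spares you from having to rule out every conceivable re-alignment by hand.
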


\begin{proof}
By Lemma \ref{lem:vector_gadgets_to_vector_gadgets}, we can assume vector gadgets in $P$ are only matched against vector gadgets in $D$. Suppose that there does not exist a pair of orthogonal vectors $a \in A$ and $b \in B$. Then, which ever vector gadget in $D$ we choose to match a vector gadget in $P$ to, matching the vector gadget requires at least $2d + 4$ substitutions. Hence, the total cost is at least $(2d + 4)N + N\lceil \log(N+1) \rceil  > \delta$, proving the contrapositive of Lemma \ref{lem:delta_implies_orthogonal}.
\end{proof}

\section{Discussion}
\label{sec:discussion}

We leave open several interesting problems. An NP-completeness proof for Problem \ref{prob:sub} on de Bruijn graphs when $k = \mathcal{O}(\log n)$ and the alphabet size is constant is still needed. Additionally, we need to extend these hardness results to when substitutions are allowed in both the graph and the pattern, and when insertions and deletions in some form are allowed in the graph and (or) the pattern. It seems unlikely that adding more types of edit operations would make the problems computationally easier, and we conjecture these variants are NP-complete on de Bruijn graphs as well. It also needs to be determined whether Problem \ref{prob:sub} is NP-complete on de Bruijn graphs with binary alphabets, or whether the SETH-based hardness results hold for Problem \ref{prob:sub_pattern} on binary alphabets.
A practical question is whether these problems are hard for small $\delta$ values on de Bruijn graphs (the problem for general graphs was proven to $W[2]$ hard in terms of $\delta$ in \cite{DBLP:conf/lata/DondiMZ20}). In applications, the allowed error thresholds are quite small. Clearly, the problems are slice-wise-polynomial with respect to $\delta$, i.e., for a constant $\delta$ it is solvable in polynomial time via brute force, but are they fixed-parameter-tractable in $\delta$? The reduction presented here (as well as the reductions in \cite{DBLP:journals/jal/AmirLL00,DBLP:conf/recomb/JainZGA19}) is based on the Hamiltonian cycle problem, where a large $\delta$ value is used. This makes the existence of such a fixed-parameter-tractable algorithm a distinct possibility.

\bibliographystyle{splncs03}
\bibliography{ref}

\section*{\Huge \center{Appendix}}
\setcounter{section}{0}
\renewcommand*{\theHsection}{chX.\the\value{section}}

\section{Missing Proofs in Section \ref{sec:NPC_proof_of_correctness}}
\label{sec:appendix_npc}

\npcIsDebruijn*

\begin{proof}
There are three properties that must be proven:  
(i) Implicit labels are unique, meaning for every implicit label at most one vertex is assigned that label; 
(ii) There are no edges missing, i.e., if the implicit label of $y \in V'$ is $S \alpha$ for some string $S[1,k-1]$ and symbol $\alpha \in \Sigma$, and there exists a vertex $x \in V'$ with implicit label $\beta S[1,k-1]$ for some symbol $\beta \in \Sigma$, then $(x,y) \in E'$; 
(iii) Implicit labels are well-defined, in that every walk of length $k-1$ ending at a vertex $x \in V'$ matches the same string (the implicit label of $x$).
Property (i) holds since after every edge transformation, vertices with the same implicit label are merged, making every implicit label occur at most once. 
For property (ii), consider the completed $D'$ and an arbitrary vertex $y$ on an arbitrary path $\phi((u,v))$. Regarding a possible edge $(x,y) \in E'$, we have the following cases:
\begin{itemize}
    \item Case: the implicit label of $y$ is 
    $
    S\alpha = \enc(L(u)) \$^W \enc(L(u)).
    $
    Then, any potential $x \in V'$ must have an implicit label 
    $
    \beta S = \beta\enc(L(u)) \$^W \enc(L(u))[1,W-1].
    $ 
    However, the only implicit labels created that have a suffix of the form $\enc(L(u)) \$^W \enc(L(u))[1,W-i]$ have a prefix $\#^{W-i}$. This implies that $\beta = \#$, and the edge $(x,y)$ already exists in $E'$ (under the assumption that there are no vertices with in-degree zero in $V$).

    \item Case: the implicit label of $y$ is 
    $
    S\alpha = \enc(L(u))[i,W]\$^W \enc(L(u)) \#^{i-1}$, $1 < i \leq W + 1.
    $ 
    Then, any potential $x$ must have an implicit label
    $
    \beta S = \beta \enc(L(u))[i,W]\$^W \enc(L(u))\#^{i-2}.
    $ 
    Because the only implicit labels with the substring $\$^W \enc(L(u))$ have a prefix consisting of some suffix of $\enc(L(u))$, this implies $\beta = \enc(L(u))[i-1]$, and $(x,y)$ already exists in $E'$.

    \item Case: the implicit label of $y$ is 
    $
    S\alpha = \$^{W-i} \enc(L(u)) \#^{W} \enc(L(v))[1,i]$, $1 \leq i \leq W.
    $ 
    Then, any potential $x$ must have an implicit label 
    $
    \beta S = \beta \$^{W-i}\enc(L(u)) \#^W \enc(L(v))[1,i-1]. 
    $
    In the case $i < W$, $\beta = \$$ and the edge $(x,y)$ already exists in $E'$. In the case where $i = W$, the only implicit label with a suffix of the form $\enc(L(u))\#^W\enc(L(v))[1,W-1]$, has a prefix $\$$, and the edge $(x,y)$ already exists in $E'$.

    \item Case: the implicit label of $y$ is
    $
    S\alpha = \enc(L(u))[i,W] \#^W \enc(L(v)) \$^{i-1}$, $1 < i \leq W + 1.
    $
    Then, any potential $x$ must have an implicit label 
    $
    \beta S = \beta \enc(L(u))[i,W] \#^W \enc(L(v)) \$^{i-2}.
    $
    Because the only implicit labels with the substring $\#^W \enc(L(v))$ have a prefix consisting of some suffix of $\enc(L(u'))$ where the edge $(u',v)$ is in $D$, the edge $(x,y)$ already exists in $E'$. This is an interesting case, as merges can happen, i.e., $\beta \enc(L(u))[i,W] = \enc(L(u'))[i-1,W]$, $u' \neq u$.

    \item Case: the implicit label of $y$ is
    $
    S\alpha = \#^{W-i} \enc(L(v)) \$^W \enc(L(v))[1,i]$, $1\leq i \leq W.
    $
    Then, any potential $x$ must have an implicit label
    $
    \beta S = \beta \#^{W-i} \enc(L(v)) \$^W \enc(L(v))[1,i-1]. 
    $
    For $i < W$, $\beta = \#$ and the edge $(x,y)$ already exists in $E'$. For $i = W$, this is equivalent to the first case.
    
\end{itemize}

We prove (iii) using induction on the number of edges transformed into paths. Our inductive hypothesis (IH) is that prior to an edge being replaced by a path, property (iii) holds for every vertex added to $V'$ thus far.
%when considering walks of length $k$ that only consist of vertices created during the transformation. 
%This suffices since eventually all vertices in $D'$ will be those created by the transformation. 
Let $i$ denote the number of edges transformed. For $i = 1$, all vertices where there exists such a walk ending at them are on the newly created path, and implicit labels are well-defined.

For $i > 1$, we assume the IH holds for all vertices created in the previous $i-1$ steps of transforming edges and merging. First consider a new vertex $x$ that is created by transforming the $i^{th}$ edge $(u_i,v_i)$. Starting with $x = \phi(u_i)$, if $x$ is merged with another transformed vertex $x'$ having the same implicit label, then all length $k-1$ walks ending at $x'$ match this implicit label, and thus the IH holds for $x$ after merging. Using a secondary induction step, we assume the IH holds post-merging for all vertices between $\phi(u_i)$ and $x$ (not including $x$) on $\phi((u_i, v_i))$. Let $x_{prev}$ be the vertex on $\phi((u_i, v_i))$ before $x$. Since all length $k-1$ walks ending at $x_{prev}$ match $x_{prev}$'s implicit label, the length $k-1$ walks obtained by disregarding the vertex at the start of these walks, and adding the vertex $x$ at the end, all match the implicit label of $x$. At the same time, any vertices merged with $x$, by the IH also have the same implicit label and hence the walks ending at them match the implicit label of $x$. Hence, the IH holds for $x$ after merging it with all vertices having the same implicit label. After processing all vertices on $\phi((u_i, v_i))$, we next consider a previously created vertex $x'' \in V'$ not in $\phi((u_i, v_i))$. Consider a newly created walk $W$ of length $k-1$ ending at $x''$ that is due to a vertex merging with vertices in $\phi((u_i, v_i))$. Since all length $k-1$ walks ending at a vertex $z$ in $\phi((u_i, v_i))$ match the same implicit label, when disregarding some number of vertices at the start of a walk that ends at $z$
%in $\phi((u_i, v_i))$ 
and appending new vertices, 
%at the end to obtain $W$, 
the resulting walk $W$ matches the implicit label for $x''$, and the IH continues to hold for $x''$ as well.
\end{proof}

\npcWalkLength*

\begin{proof}

%It is convenient for us to 
We first define forward distance and backward distance. Let $x, y \in V'$. The forward distance from $x$ to $y$ is defined as the minimum number of edges on any path from $x$ to $y$ (the usual distance in a directed graph). The backward distance from $x$ to $y$ is defined as the minimum number of edges on any path from $y$ to $x$. We say a marked vertex $\phi(u)$ is \emph{backward adjacent} to $x$ if there exists a walk from $\phi(u)$ to $x$ not containing any other marked vertices, and $\phi(v)$ is \emph{forward adjacent} to $x$ if there exists a walk from $x$ to $\phi(v)$ not containing any other marked vertices.

We use induction on the number of edges transformed.  Our inductive hypothesis (IH) will be that the length of all walks that end at and contain only two marked vertices is $4W$. We add to our IH that a vertex $x$ created from an edge transformation having an implicit label of the form:
\begin{enumerate}
    \item $\enc(L(u))[j,W]\$^W \enc(L(u))\#^{j-1}$, $1\leq j \leq W$, has backward distance $j-1$ from $\phi(u)$, which is its only backward adjacent marked vertex, and forward distance $4W-j+1$ from all of its forward adjacent marked vertices; 
    
    \item $\$^{W-j}\enc(L(u))\#^W\enc(L(v)))[1,j]$, $0 \leq j \leq W$, 
    has backward distance $W + j$ from $\phi(u)$, which is its only backward adjacent marked vertex, and forward distance $3W - j$ from all of its forward adjacent marked vertices; 
    
    \item  $\enc(L(u))[j,W]\#^W\enc(L(v)))\$^{j-1}$, $1 \leq j \leq W$,
    has backward distance $2W + j - 1$ from all of its backward adjacent marked vertices, and forward distance $2W - j + 1$ from $\phi(v)$, which is its only forward adjacent marked vertex; 
    
    \item  $\#^{W-j}\enc(L(v)))\$^W\enc(L(v))[1,j]$, $0 \leq j \leq W$,
   has backward distance $3W + j$ from all of its backward adjacent marked vertices, and forward distance $W - j$ from $\phi(v)$, which is its only forward adjacent marked vertex.
    
\end{enumerate}
The base case, $i = 1$, is satisfied since there exists only one such path and all stated properties hold.
Now, for $i > 1$, let $(u_i, v_i)$ be the $i^{th}$ edge transformed. We assume the IH holds for all vertices and walks created in the first $i-1$ edge transformations. First, observe that for any walk ending at, and containing only two previously created marked vertices, for all vertices on this walk the distances from their forward adjacent marked vertices and backward adjacent marked vertices will not be altered unless one of the vertices on this walk is merged with a vertex on $\phi((u_i, v_i))$. Also, all of the stated properties in the IH also hold for $\phi((u_i, v_i))$ prior to merging any vertices.
Now, let $y$ be a vertex on $\phi((u_i, v_i))$. Starting with $y = \phi(u_i)$, and continuing from $\phi(u_i)$ to $\phi(v_i)$, we merge $y$ with existing vertices when their implicit labels match. 
%In each case, the stated forward and backward distance properties hold for $y$ prior to $y$ being merged with any vertex $x$ having the same implicit label. 
Because the stated distance properties hold for $x$ and $y$ prior to merging, they continue to hold for the vertex created from merging $x$ and $y$ as well. Moreover, for all of the vertices on any walk containing this now merged vertex the distances from its forward adjacent and backward adjacent marked vertices are unaltered. 
Because for every vertex in the new graph, these distances are unaltered, the IH regarding the length of $4W$ for walks containing only two marked vertices continues to hold as well.
\end{proof}

\begin{figure}
%\vspace{-1em}
\centering
\begin{minipage}{.42\textwidth}
    \centering
    \includegraphics[width=1\textwidth]{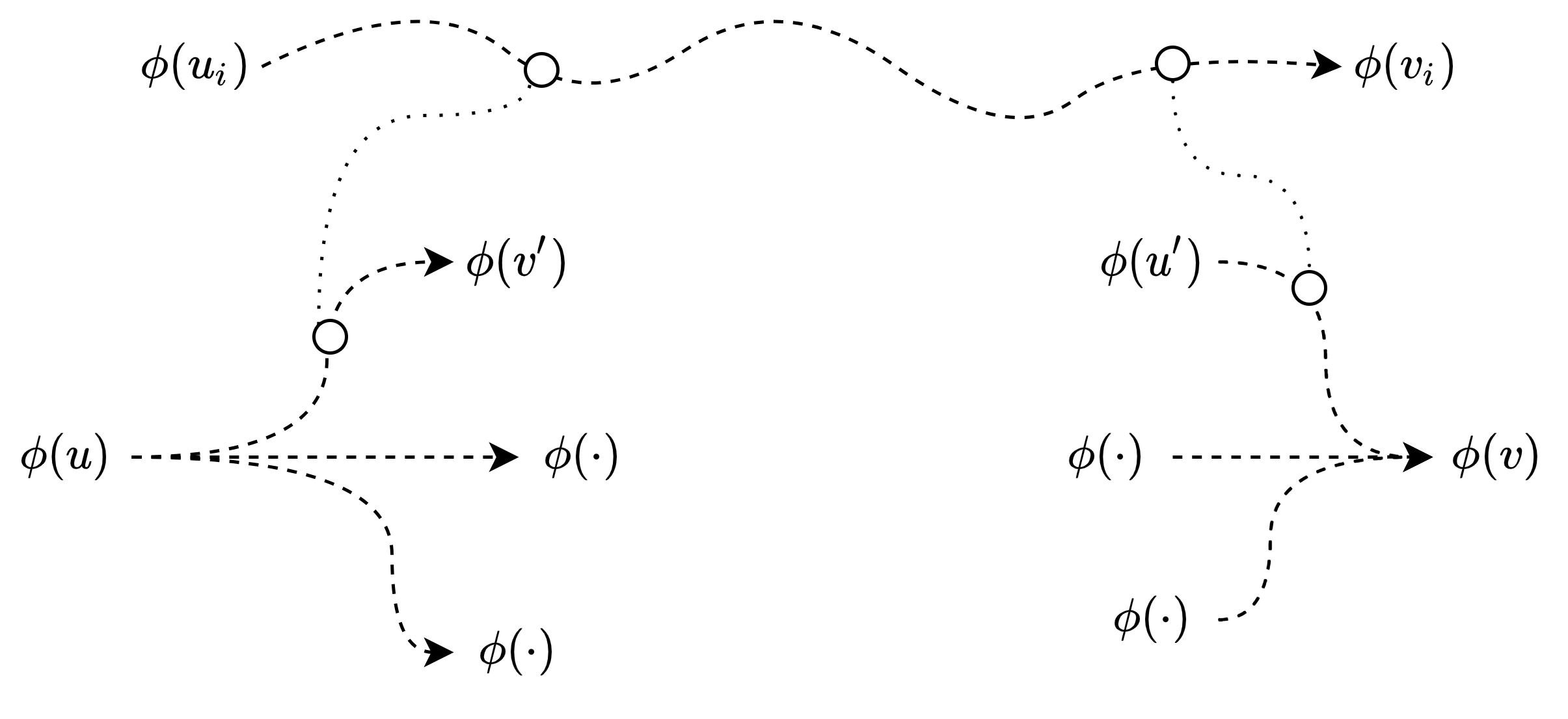}
    \caption{In the proof of Lemma \ref{lem:path_iff_in_D}, we consider whether the path $\phi((u_i, v_i))$ being added could potentially cause a path between $\phi(u)$ and $\phi(v)$. The white circles connected by the thin dashed curve represent merged vertices.}
    \label{fig:path_bridge}
\end{minipage}~~
\begin{minipage}{.58\textwidth}
    \centering
    \includegraphics[width=1\textwidth]{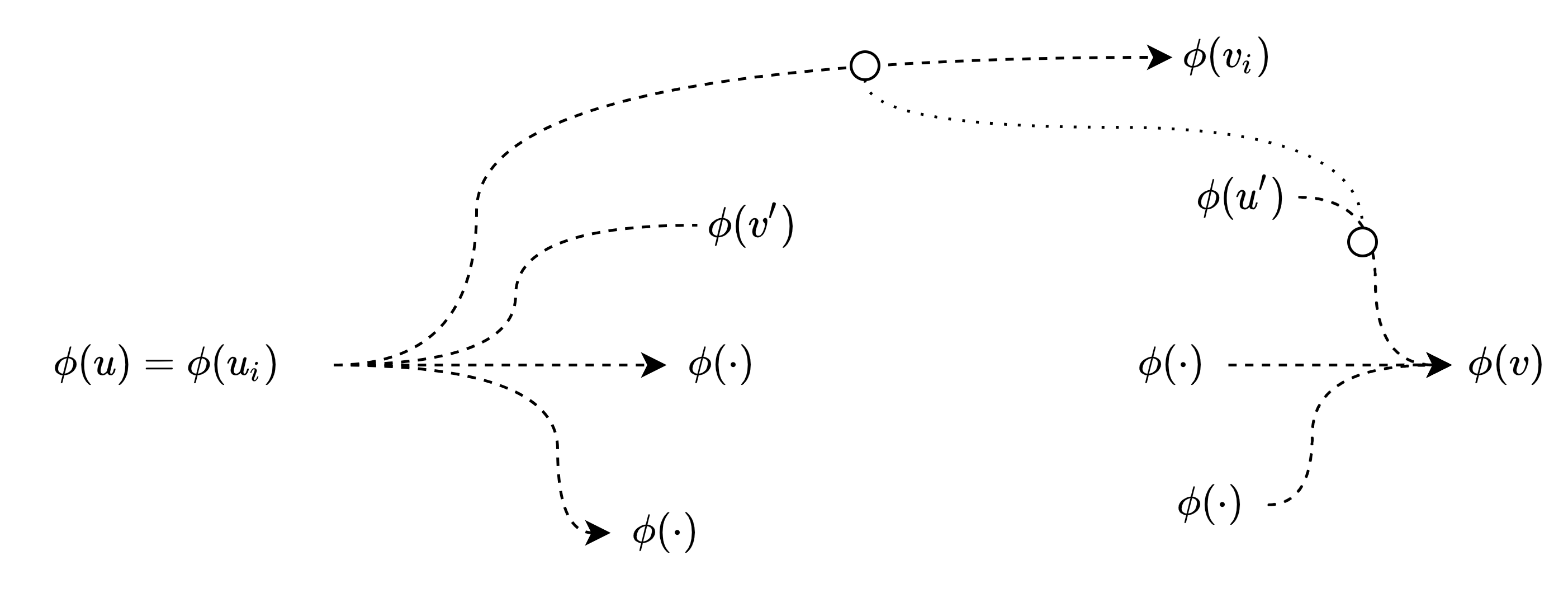}
    \caption{In the proof of Lemma \ref{lem:path_iff_in_D}, the case where $u = u_i$ and $v' \neq v_i$.}
    \label{fig:path_bridge_3}
\end{minipage}
\end{figure}
\vspace{-3em}
\begin{figure}
\centering
\begin{minipage}{.5\textwidth}
    \centering
    \includegraphics[width=\textwidth]{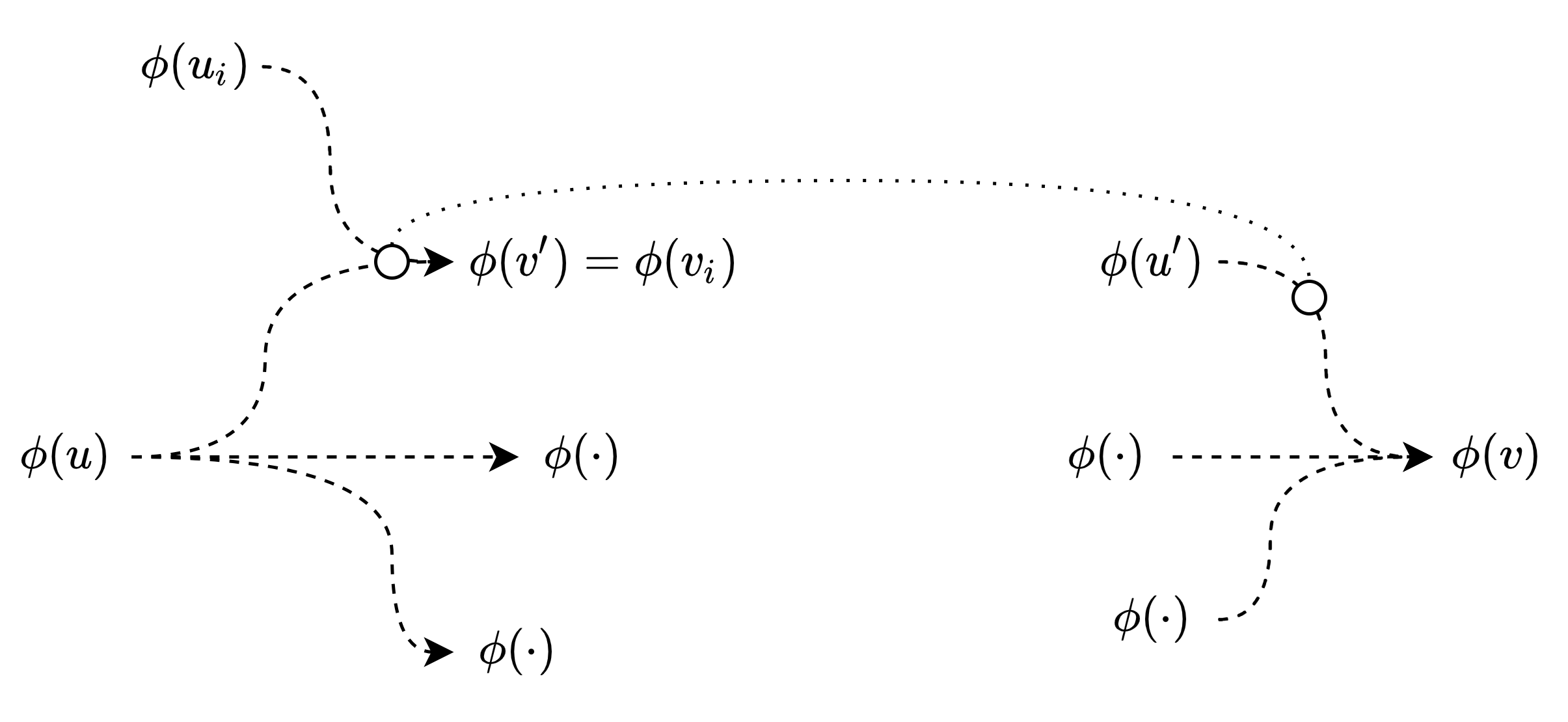}
    \caption{In the proof of Lemma \ref{lem:path_iff_in_D}, the case where $u \neq u_i$, $v' = v_i$, and $u_i \neq u'$.}
    \label{fig:path_bridge_2}
\end{minipage}~~
\begin{minipage}{.5\textwidth}
    \centering
    \includegraphics[width=\textwidth]{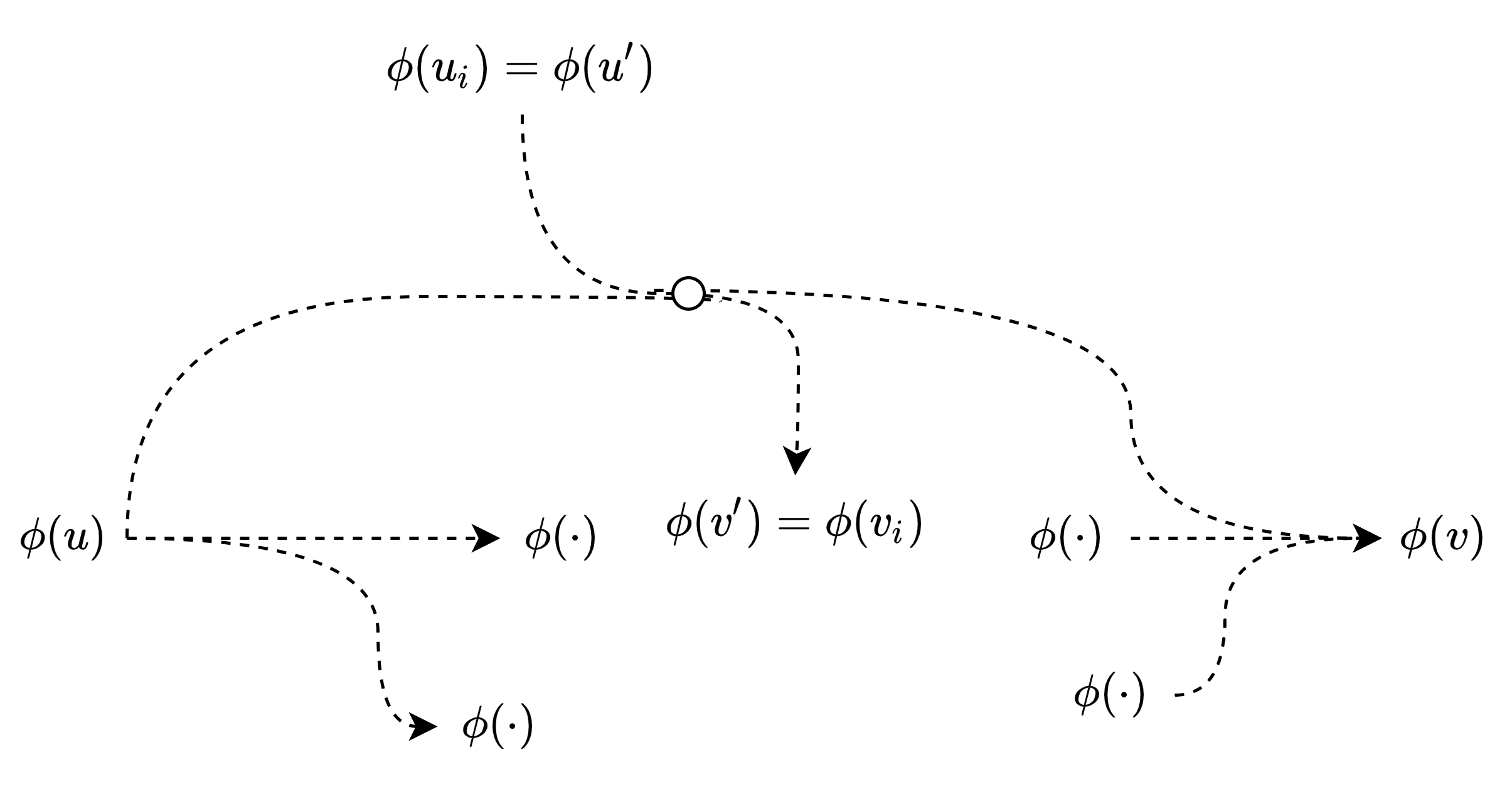}
    \caption{In the proof of Lemma \ref{lem:path_iff_in_D}, the case where $u \neq u_i$, $v' = v_i$, and $u_i = u'$.}
    \label{fig:path_bridge_4}
\end{minipage}   
\end{figure}

\npcPathIffInD*

\begin{proof}
It is clear from construction that if there is an edge $(u,v) \in E$, then such a walk is in $D'$.

In the other direction, suppose for the sake of contradiction that there exists such a walk starting at $\phi(u)$ and ending at $\phi(v)$ with no other marked vertices between $\phi(u)$ and $\phi(v)$ on the walk, and $(u,v) \notin E$. Let the first such walk 
%consisting of vertices from transformed edges 
be created when transforming the $i^{th}$ edge $(u_i, v_i)$. The only way such a walk could exists is if some vertex in $\phi((u_i, v_i))$ is merged with a vertex on a walk $\phi((u,v'))$ for some $v' \neq v$, and some vertex in $\phi((u_i, v_i))$ merged with a vertex in a walk $\phi((u',v))$ for some $u' \neq u$. This is since, prior to creating $\phi((u_i, v_i))$ all walks starting at $\phi(u)$ encountered some other marked vertex, say $\phi(v')$, before $\phi(v)$. Similarly, there existed some set of marked vertices not including $\phi(u)$ such that every walk containing a marked vertex and ending at $\phi(v)$ must include at least one vertex in this set, say $\phi(u')$. See Figure \ref{fig:path_bridge}. Consider cases:

\begin{itemize}

\item $u = u_i$ and $v' = v_i$: This contradicts the assumption that $(u_i, v_i)$ is transformed on the $i^{th}$ step.

\item $u = u_i$ and $v' \neq v_i$ (Figure \ref{fig:path_bridge_3}): By Lemma \ref{lem:disjoint_paths}, since $u_i = u \neq u'$, $\phi((u_i, v_i))$ and $\phi((u', v))$ can only share a vertex if $v_i = v$. However, this implies the edge $(u_i, v_i) = (u, v) \in E$, a contradiction.

\item  $u \neq u_i$ and $v' \neq v_i$:  We can directly use Lemma \ref{lem:disjoint_paths} to say no such merged vertices exists between $\phi((u,v'))$ and $\phi((u_i, v_i))$.

\item $u \neq u_i$ and $v' = v_i$ (Figure \ref{fig:path_bridge_2}): By Lemma \ref{lem:disjoint_paths}, if $u_i \neq u'$, then $\phi((u_i, v_i))$ and $\phi((u', v))$ can only share a vertex $v = v_i$. However, this would imply $v = v'$, a contradiction.

The more interesting case is if $u_i = u'$ (Figure \ref{fig:path_bridge_4}). Any vertex $y$ having an implicit label containing $\enc(L(u'))$ and occuring in $\phi((u_i, v_i))$ and $\phi((u',v))$  occurs before (has smaller backward distance to $\phi(u')$) any vertex with implicit label containing $\enc(L(v'))$. At the same time, any vertex $x$ occuring in $\phi((u,v'))$ and $\phi((u_i, v_i))$ has an implicit label containing $\enc(L(v'))$.  Since the vertex $x$ occurs later in $\phi((u_i, v_i))$ than any shared vertex $y$ in $\phi((u_i, v_i))$ and $\phi((u', v))$, the only way any vertices in $\phi((u_i, v_i))$ are in a walk from $\phi(u)$ to $\phi(v)$ not containing any other marked vertices is if there is walk from $x$ to $y$ not containing marked vertices, however, the cycle this creates contradicts Lemma \ref{lem:walk_length}.
\qedhere
\end{itemize}
\end{proof}

%\vspace{-1em}

\npcDollarSigns*

\begin{proof}
We first make the following observations: pre-substitution of any of the vertex labels in $D'$,

\begin{itemize}

\item (1) For all $u \in V$, there is exactly one path in $D'$ that matches 
$
\enc(L(u)) \#^W \enc(L(u))[1,W-\ell],
$ 
and all vertices on this path have in-degree and out-degree one. This follows from the only vertices with in-degree greater than one having implicit labels 
$
\enc(L(u))[i,W]\#^W\enc(L(v))\$^{i-1}$ where $W-\ell < i \leq W+1$ 
(these vertices have vertex label $\$$). And the vertices with out-degree greater than one having implicit labels of the form 
$
\$^{W-i}\enc(L(u))\#^W \enc(L(v))[1,i]$ where $W-\ell \leq i \leq W$
(the last $\ell$ symbols in $\#^W\enc(L(v)$).
This path contains the marked vertex $\phi(u)$. Furthermore, all marked vertices are included on exactly one such path.

\item (2) Every maximal walk containing only $\$$ or $\#$ symbols is of length $W$, and the distance from the end of any maximal walk consisting of only $\$$ symbols (or $\#$ symbols) to the start of a maximal walk consisting of only $\#$ (or $\$$ symbols resp.) is $W$. This follows from the construction of $D'$: every vertex added in the construction has an implicit label where all maximal substrings consisting of non-$\$$ or non-$\#$ are of length $W$, and maximal substrings consisting of $\$$ or $\#$ are of length $W$. 
\end{itemize}

To see the `local' number of substitutions caused by matching a $\#/\$$-symbol in $D'$ to a $0/1$ symbol in $P$, suppose the matching of $\enc(L(u))$ in $P$ is `shifted left' by $1 \leq s < W$ so that the first $s$ symbols of some $\enc(L(u))$ in $P$ are matched against the last $s$ symbols in some walk of $\$$/$\#$-symbols in $D'$. These last $s$ symbols require $s$ substitutions. In addition, assuming $s < 2\ell$, due to the prefix $(0^{2\ell} 1)^{2\ell}$, at least $2\ell-1$ substitutions that do not involve a $\#$ or $\$$ symbol are needed as well.

We now look at the number of substitutions needed on a `global' level. Using Lemma \ref{lem:walk_length}, it can be inferred that every walk of length $4W$ contains an originally marked vertex. Hence, while matching $P'$ at least $\lfloor|P'|/4W\rfloor = 4Wn/4W  = n$ times an originally marked vertex is visited. Because every substring of $P' = P[1, |P|-4W]$ of length $3W - \ell$ is distinct, every path described in Observation 1 is traversed at most once while matching $P'$. Since each originally marked vertex is on a unique path that can be traversed at most once, and we traverse at least $n$ such paths, we traverse $n$ distinct paths described in Observation 1. 
We can now use Observation 2 to infer that the substitutions needed to match the shifted patterns in $P'$ must be repeated $n$ times.
Hence, to match $P'$ the total number of substitutions involving $\$$/$\#$ symbols is at least $sn$. When $s < 2\ell$, the total number of substitutions is at least
$
(s + 2\ell - 1)n > 2\ell(n-1) = \delta.
$
When $2\ell \leq s < W$, then $2\ell$ substitutions to match the substring $(0^{2\ell} 1)^{2\ell}$ in $P$ may not be needed, but the total number of substitutions required is still greater than $\delta$ since $sn \geq 2\ell n > \delta$.
A symmetric argument can be used for when the matching of $P$ to $D'$ is `shifted right' by $s$ so that the last $s$ symbols in $\enc(L(u))$ in $P$ are matched against the first $s$ symbols in some walk of $\$$/$\#$-symbols in $D'$. 

For $W < s < 4W$, it still holds that all paths described in Observation 1 are traversed exactly once. Combined with Observation 2, we can infer that the substitution cost incurred when making one path of length $W$ originally matching $\#^W$'s match a substring of $P$ without $\#$'s is incurred at least $n$ times. This results in the total number of needed substitutions being at least $nW > \delta$. 
\end{proof}

\section{Missing Proofs in Section \ref{sec:SETH__proof_of_correctness}}
\label{sec:appendix_seth}

\sethIsDebruijn*

\begin{proof}
For each of the four graph sections discussed above, we will prove for each vertex in that section that Conditions (i)-(iii) from the proof of Lemma \ref{lem:is_debruijn} hold. That is, every vertex $v$, $v$'s implicit label well-defined, unique, and there are no additional edges that should have $v$ as their head.

\begin{itemize}
\item Selection fan-in:

\begin{itemize}
\item (well-defined) For any vertex $v$ in the Selection fan-in, there are two paths of length $k-1$ leading to $v$ (one containing vertices labeled with $2$'s from the Post-selection merge section and one containing vertices labeled with $2$'s from the Synchronization loop). Both match the same string $2^{\ell'} 3 B$ where $\ell' < \ell$ and $B$ is a binary string of length at most $\lceil \log N + 1 \rceil$.

\item (unique) The binary string $B$ could only possibly occur again as a suffix the Selection section. However, all implicit labels occurring in that section contain longer binary strings. Hence the implicit label occurs only once in $D$.

\item (no missing inbound edges) Let $u$ be any vertex such that $(u,v)$ is in $D$. A vertex $v$ in the Selection fan-in has an implicit label of the form
$
S \alpha = 2^{\ell'} 3 B_{i'}[1,i]$, $\ell' <\ell$, $1 \leq i < \lceil \log N \rceil$, $0 \leq i' \leq N+1$.
This implies that $u$ has the implicit label
$
\beta S = \beta 2^{\ell'} 3 B_{i'}[1,i-1]. 
$
Based on the limited number of implicit labels present in $D$, it must be that $\beta = 2$, and there exists only one such $u$. Hence, the edge $(u,v)$ already exists.

\end{itemize}

\item Selection section:

\begin{itemize}
\item (well-defined) For a vertex $v$ in the Selection section, there are two length $k-1$ paths leading to $v$ (one with 2's from the Post-selection merge section and one with 2's from the Synchronization loop). Both match a string of the form 
$
2^{\ell'} 3 B_{i'} f_A(a_i[1]) f_A(a_i[2]) ... f_A(a_i[j])[1,h]$ 
where $0 \leq \ell' < \ell$ and $1 \leq h \leq 4$. 

\item (unique) If $v$ has a path of length $k-1$ matching $2^{\ell'} 3 B_{i'} f_A(a_i[1]) f_A(a_i[2])$ $... (f_A(a_i[j])[1,i]$, then it must be in the Selection section. The substring $B_{i'}$ following the prefix $2^{\ell'}3$ is distinct, hence this implicit label only occurs once in the Selection section.

\item (no missing inbound edges) Taking $u$ and $v$ as above, if the vertex $v$ has an implicit label of the form
$
S \alpha = 2^{\ell'} 3 B_{i'} f_A(a_i[1]) f_A(a_i[2]) ... f_A(a_i[j])[1,h]$, $1\leq h \leq 4$,
this implies that the any potential $u$ has an implicit label
$
\beta S = \beta 2^{\ell'} 3 B_{i'}[1,h-1]f_A(a_i[1]) f_A(a_i[2]) ... f_A(a_i[j])[1,h-1]
$
or
$
\beta S = \beta 2^{\ell'} 3 B_{i'}[1,h-1]f_A(a_i[1]) f_A(a_i[2]) ... f_A(a_i[j-1]). 
$
In either case, $\beta = 2$, and the edge $(u,v)$ already exists.
If the vertex $v$ has an implicit label of the form
$
S \alpha = B_{i'} f_A(a_i[1]) f_A(a_i[2]) ... f_A(a_i[d]),
$
then any potential vertex $u$ has an implicit label
$
\beta S = \beta B_{i'} f_A(a_i[1]) f_A(a_i[2]) ... f_A(a_i[d])[1,3]
$
where $\beta$ must be $3$, and the edge $(u,v)$ already exists.
\end{itemize}

\item Post-selection merge section:

\begin{itemize}
\item (well-defined) For a vertex $v$ in this section, all length $k-1$ paths ending at $v$ match a string of the form $B 2^{\ell'}$ where $B$ is a binary string. By construction, the paths ending at $v$ match the same string (they were merged based on this condition).

\item (unique) Again by construction, if another vertex $v'$ in the Post-selection merging section has a length $k-1$ path ending at it that matches $v$'s implicit label $v'$ will be merged with $v$. At the same time, vertices in the other sections of $D$ will not have an implicit label of the form $B2^\ell$.

\item (no missing inbound edges) Taking $u$ and $v$ as above, vertex $v$ has an implicit label of the form
$
S \alpha = B 2^{\ell'}$, $\ell' \geq 1,
$
this implies that any potential vertex $u$ has an implicit label
$
\beta S = \beta B 2^{\ell'-1}.
$
Such a vertex $u$ is already in the Post-selection merge section or is a vertex at the end of a path in the Selection section (if $\ell' = 1$). Since appending a $2$ and removing $\beta$ will make the implicit label of $u$ equal to the implicit label of $v$, the vertex at the head of the edge with tail $u$ must have been merged with $v$. Hence, the edge $(u,v)$ already exists.
\end{itemize}

\item Synchronization loop:

\begin{itemize}
\item (well-defined) There are two length $k-1$ paths to a vertex v in the synchronization loop. Both match the string $2^{\ell'} 3 2^{\ell''}$ where $\ell' + \ell'' = k-1 = \ell$, and $\ell'$ depends on $v$'s position within the Synchronization loop.

\item (unique) An implicit label for a vertex in any other section contains a symbol that is not a $2$ or a $3$. Within the synchronization loop, each implicit label clearly occurs exactly once.

\item (no missing inbound edges) Taking $u$ and $v$ as above, vertex $v$ has an implicit label of the form
$
S \alpha = 2^{\ell'} 3 2^{\ell''}.
$
This implies that any potential vertex $u$ has an implicit label
$
\beta S = \beta 2^{\ell'} 3 2^{\ell''-1}.
$
If $\ell' < \ell$ it must be that $\beta = 2$ and the edge $(u,v)$ already exists. If instead $\ell' = \ell$, then for both
$
\beta S = 0 2^\ell 3
$
and
$
\beta S = 1 2^\ell 3
$
there already exists an edge $(u,v)$ as well. \qedhere
\end{itemize}
\end{itemize}
\end{proof}

\sethThreesToThrees*

\begin{proof}
Suppose that some $3$ in $P$ is not matched with $3$ in $D$ or with the final vertex in a path in the Selection section. Since any walk between $3$'s in $D$ has a length that is a multiple of $k$ and $3$ in $P$ are $k-1$ symbols apart, all $3$'s must then not be matched with $3$ in $D$. This requires at least $tN$ substitutions within $P$. On the other hand, when $3$'s in $P$ are matched with $3$'s in $D$, there exists a solution requiring at most $4d(N+1) +N\lceil \log(N + 1) \rceil$. Specifically, this is obtained by matching each vector gadget in $P$, $f_B(b_i[1]) ... f_B(b_i[d])$ to the $N+1^{th}$ path in the Selection section. Since $t = 5d + \lceil \log(N+1) \rceil > 4d + \frac{4d}{N} + \lceil \log(N+1) \rceil$ for $d = o(N)$ and $N$ large enough, we can assume that $tN > 4d(N+1) +N\lceil \log(N + 1) \rceil$. Hence, all $3$'s in $P$ are matched with the $3$ in $D$ or with some final vertex in a path in the Selection section

Next, suppose some $3$ in $P$ is matched with the last vertex in a path in the Selection section. We consider the first such occurrence. In the case where this occurrence of $3$ in $P$ is followed in $P$ by a substring $2^{\lceil \log(N+1)\rceil}f_B(a_i[1])...f_B(a_i[d])f_B(1)$, a cost of at least $8(d+1)$ is incurred, first at least $4(d+1)$ from matching the substring $2^{\ell} 3$ in $P$ to a path through Selection fan-in and the Selection section, then an additional $4(d+1)$ from matching a vector gadget in $P$ to a path of $2$'s in the Post-selection merge section. We could have instead matched the Synchronization loop twice with a cost of only $4(d+1)$ substitutions, and started and ended at the same vertex while still matching $2^\ell 3 2^{\lceil \log(N+1)\rceil}f_B(a_i[1])...f_B(a_i[d])f_B(1)$. Hence, in this case, matching $3$ in $P$ with the last vertex in a path in the Selection section is suboptimal. In the case where the occurrence of $3$ in $P$ is followed in $P$ by $2^\ell3$, then the cost incurred is only $4(d+1)$. However, we could have instead matched $2^\ell 3 2^\ell 3$ with the Synchronization loop twice with a substitution cost of $0$, and again started and ended at the same vertex. Hence, matching $3$ in $P$ with the last vertex in a path in the Selection section is again suboptimal.
\end{proof}

\end{document}